\documentclass{article}

\usepackage{microtype}
\usepackage{graphicx}
\usepackage{subcaption}
\usepackage{booktabs}
\usepackage{url}
\usepackage{hyperref}

\PassOptionsToPackage{numbers}{natbib}

\usepackage{arxiv_2026}

\usepackage{amsmath}
\usepackage{amssymb}
\usepackage{mathtools}
\usepackage{amsthm}
\usepackage{bm}
\usepackage{enumitem}

\usepackage{algorithm}
\usepackage{algorithmic}

\usepackage[capitalize,noabbrev]{cleveref}

\allowdisplaybreaks

\theoremstyle{plain}
\newtheorem{theorem}{Theorem}[section]

\newtheorem{lemma}[theorem]{Lemma}

\theoremstyle{definition}

\newtheorem{assumption}[theorem]{Assumption}
\theoremstyle{remark}

\newcommand{\R}{\mathbb{R}}
\newcommand{\E}{\mathbb{E}}
\newcommand{\Var}{\mathrm{Var}}
\newcommand{\Cov}{\mathrm{Cov}}

\newcommand{\supp}{\mathrm{supp}}

\newcommand{\x}{\bm{x}}
\newcommand{\y}{\bm{y}}
\newcommand{\z}{\bm{z}}
\newcommand{\h}{\bm{h}}
\newcommand{\s}{\bm{s}}
\newcommand{\eps}{\bm{\varepsilon}}
\newcommand{\A}{\bm{A}}
\newcommand{\J}{\bm{J}}

\newcommand{\Sig}{\bm{\Sigma}}
\newcommand{\Sigeps}{\Sig_\varepsilon}
\newcommand{\Omeg}{\bm{\Omega}}

\title{Inferring Active Neural Circuits Using Diffusion Scores}

\author{
  Savik Kinger \\
  Department of Computer Science\\
  Yale University
  \And
  Johannes Bertram \\
  Machine Learning in Science \\
  Excellence Cluster Machine Learning \\
  University of Tübingen \\
  Tübingen AI Center \\
  ELLIS Institute Tübingen
  \And
  Luciano Dyballa \\
  School of Science \& Technology \\
  IE University 
  \And
  Eviatar Yemini \\
  Department of Neurobiology\\
  University of Massachusetts Chan Medical School 
  \And
  Steven W. Zucker\\
  Depts. of Computer Science and Biomedical Engineering\\
  Wu Tsai Institute\\
  Yale University
}

\begin{document}

\maketitle

\begin{abstract}
\looseness=-1
In biological systems, neural circuits compute through directed, short-latency interactions whose effects unfold across multiple time scales and behavioral contexts. We address the problem of inferring these local, lag-specific interactions from sampled neural population activity under varying stimuli, without assuming a parametric form for the underlying dynamics. Our approach leverages denoising score models by estimating joint-window scores over consecutive activity snapshots (i.e., brain states) and converting these scores into calibrated, directed edge tests via cross-block score products. The key insight is that these products recover the Jacobian of the transition map between brain states under nonlinear dynamics. To cleanly separate lag-specific effects, we introduce minimal multi-block windows that condition on intermediate time points, avoiding the omitted-lag bias inherent in pairwise analyses. The resulting method, \emph{Score--Block Time Graphs} (SBTG), identifies lag-specific directed interactions in sampled neuronal population data. We specifically apply SBTG to whole-brain \emph{C.~elegans} calcium imaging data to recover lag-specific circuit structure not resolved by current methods, including improved alignment with independent connectomes, cell-type-specific temporal organization, and neuromodulatory profiles consistent with known receptor kinetics. These findings highlight the potential for SBTG to serve as a practical ``AI for science'' tool by turning high-dimensional neural population recordings into statistically testable circuit hypotheses.
\end{abstract}

\begin{figure}[ht]
    \centering
    \includegraphics[width=0.95\linewidth]{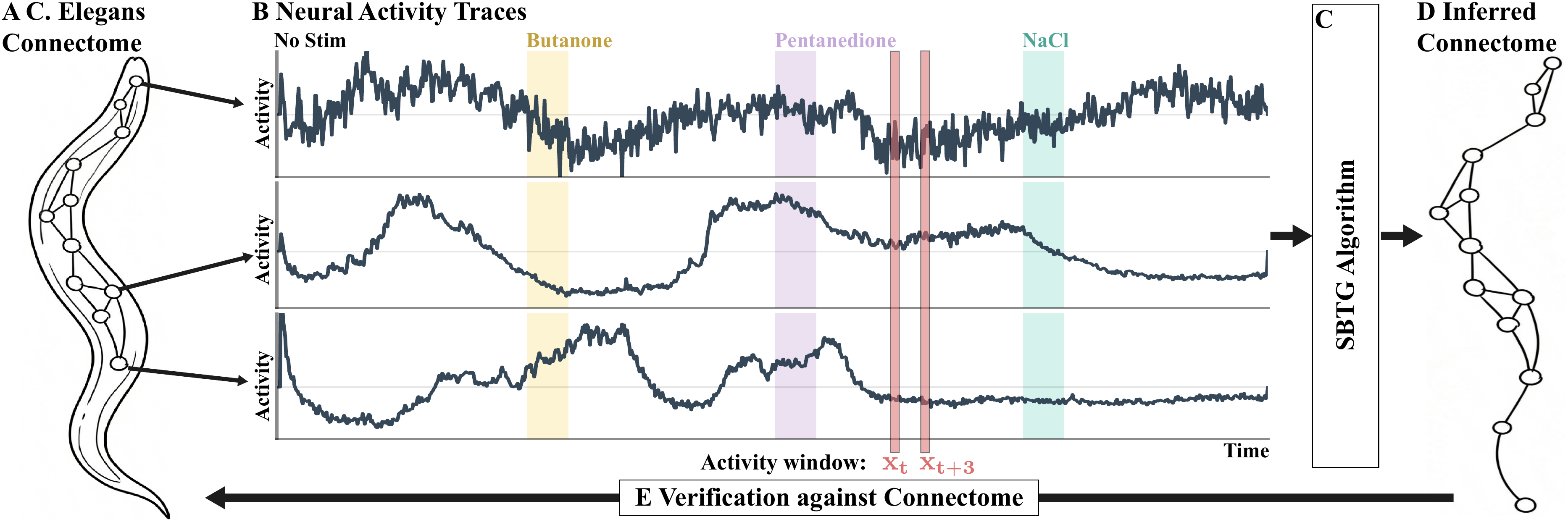}
    \caption{{\bf Goal: inferring connections from activity.} \textbf{A}: \textit{C. elegans} (a small worm) with cartoon anatomical connectome. \textbf{B}: Available data: neural activity measurements for three sampled neurons. The activity window shows observed ``state" of brain at time $t$. \textbf{C}: SBTG algorithm; see Figure~\ref{fig:pipeline}. \textbf{D}: Inferred putative functional connectome; i.e., which connections among neurons likely caused the state to evolve. \textbf{E}: Verification of putative connectome against anatomy.}
    \label{fig:overview}
    \vspace{-2mm}
\end{figure}

\section{Introduction}
\label{sec:intro}
\vspace{-1mm}

\looseness=-1
Neural circuits consist of directed interactions and modulators operating across multiple time scales. A sensory neuron responds to a stimulus, triggering activity in downstream interneurons within tens to hundreds of milliseconds, which in turn drive motor outputs that shape behavior. This directed flow of information---who influences whom, and on what timescale---is the computational foundation of neural processing. Modern high-throughput recording technologies, such as whole-brain calcium imaging, capture the simultaneous activity of tens to thousands of neurons in behaving animals~\citep{Siegle2021,Semedo2019, susoy2021natural}, affording a dynamic view of the functional activity underlying sensation and behavior \cite{flavell2022dynamic}. Yet these recordings capture only the observable outputs of the circuit, not the synaptic wiring diagrams that generate them, nor the wireless extrasynaptic signaling networks (e.g., neuromodulation via monoamines and neuropeptides) that are absent from anatomical connectomes \cite{bargmann2012beyond, bargmann2013connectome, venkatesh2025c, bentley2016multilayer, ripoll2023neuropeptidergic}. Recent work has shown a disconnect between static connectomes and dynamic functional activity measured \textit{in vivo}, often revealing weak or inconsistent correspondence between these two circuit views \cite{Yemini2021, susoy2021natural, uzel2022set, Randi2023, currier2025infrequent}. Understanding the origin of this disconnect is critical. We therefore ask: \emph{can directed, time-resolved circuit structure be recovered from population activity without prior anatomical knowledge?} (\cref{fig:overview}). The answer speaks to the emerging ``AI for science'' objective: using modern machine learning to extract structured, falsifiable scientific conjectures from complex measurements.

\looseness=-1
This problem has attracted substantial methodological effort, but existing approaches face fundamental limitations: nonlinear dynamics, high-dimensional populations, and short recordings with strong temporal dependence. Vector autoregressive models and Granger causality require linearity assumptions that break down for saturating or state-dependent neural responses~\citep{Granger1969,Lutkepohl2005}. Point-process models strongly assume spike observations rather than calcium fluorescence~\citep{Hawkes1971}. Constraint-based causal discovery methods like PCMCI$^+$ handle temporal dependence but rely on conditional-independence testing that limits scalability~\citep{Runge2020PCMCIplus}. Meanwhile, score-based generative models have shown strong empirical alignment with neural population structure---yet their internals remain opaque: they do not reveal \emph{which edges} carry the directed influence~\citep{Ozcelik2023,Kapoor2024}.

\looseness=-1
We build on a different perspective: the local geometry of joint distributions over consecutive activity snapshots. The score function---the gradient of the log-density---encodes how local probability mass evolves as one moves through neural activity space. For a window of observations $(\x_t, \x_{t+1})$, the \emph{cross-block} score product asks how the score with respect to the future state depends on perturbations of the past state. This quantity is intimately related to the mixed Hessian of the log-density, which---under additive-noise dynamics---corresponds to the Jacobian of the underlying state transition map. The Jacobian is precisely the object of interest: its $(j,i)$ entry quantifies how strongly neuron $i$ at time $t$ appears to influence neuron $j$ at time $t{+}1$. Cross-block score products thus provide a window into directed causal structure, estimable from a learned score model without assuming any parametric form for the dynamics.

A critical challenge arises when interactions span multiple timescales, as is common in biological neural circuits. In animal brains, fast synaptic transmission operates in tens of milliseconds while slow neuromodulatory signals unfold over longer periods \cite{greengard2001neurobiology, bargmann2012beyond, bargmann2013connectome, watteyne2024neuropeptide}. Naively estimating coupling from pairs $(\x_t, \x_{t+1})$ conflates these effects: the inferred ``lag-1'' coupling mixes direct lag-1 influence with indirect effects mediated through unconditioned intermediate time points. This reflects both multi-step links in the connectome \cite{Randi2023, dvali2025diverging, creamer2025bridging} and the effect of diffusive neuromodulators \cite{bargmann2012beyond, bargmann2013connectome, bentley2016multilayer, watteyne2024neuropeptide, ripoll2023neuropeptidergic}.  Formally, this is the multi-lag analog of omitted variable bias in regression, and it means that pair-window methods cannot cleanly separate timescale-specific causal effects. Our solution is to construct \emph{minimal multi-block windows} that condition on all intermediate time points, separating the contribution of each timescale. For lag $\ell$, we model the joint distribution of $(\x_t, \x_{t+1}, \ldots, \x_{t+\ell})$ and extract the cross-block score product between the first and last blocks. This yields per-lag Jacobian estimates that are identifiable under a nonlinear multi-lagged model of brain state dynamics.

\begin{figure}[h]
    \centering
    \includegraphics[width=0.75\linewidth]{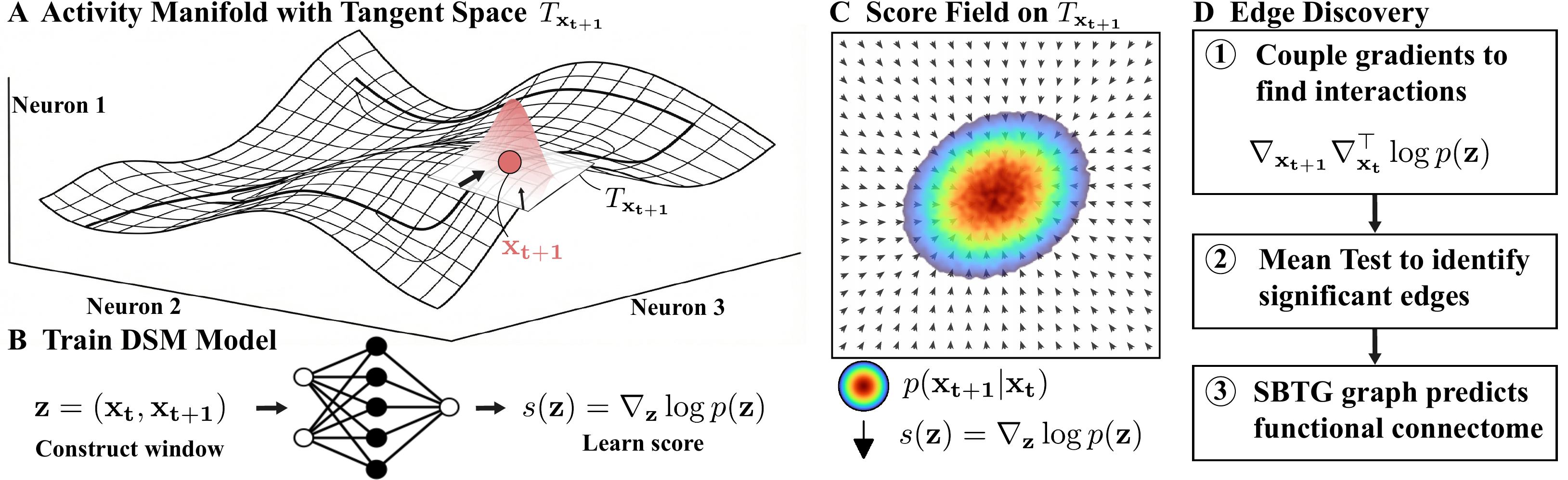}
    \caption{{\bf Learning connections between states.} \textbf{A}: States of the brain evolve as dynamics on some manifold in neural space. We model the transitions $\mathbf{z}$; i.e., the tangents to these dynamics (tangent space $T_{\mathbf{x_{t+1}}}$ shown). \textbf{B}: Training DSM model on activity transition window to learn scores. \textbf{C}: Cartoon score field on $T_{\mathbf{x_{t+1}}}$, showing the score pointing towards higher probability transitions. \textbf{D}: Steps to infer likely connections.}
    \label{fig:pipeline}
    \vspace{-2mm}
\end{figure}

We call the resulting method \emph{Score--Block Time Graphs} (SBTG), reflecting its construction from joint-window score geometry and its output as lag-specific directed graphs. SBTG incorporates several design choices critical for real neural data: null-contrast hyperparameter selection that avoids circularity with ground truth, cross-fitting for valid out-of-sample inference, and heteroskedasticity and autocorrelation consistent (HAC) standard errors for temporal dependence. The output is a set of lag-specific directed graphs with calibrated significance and sign information.

Whole-brain \emph{C.~elegans} calcium imaging data provides a particularly well-aligned and controlled setting for this approach. \emph{C.~elegans'} compact nervous system and stereotyped wiring reduce the severity of hidden-neuron confounding and support pooling partially overlapping recordings under a shared-dynamics assumption, while independent connectome atlases provide directed references for downstream validation. We therefore make the following contributions:
\vspace{-1mm}
\begin{enumerate}[leftmargin=1.2em]
  \item \textbf{Approach.} We develop Score--Block Time Graphs (SBTG), a method that provably recovers lag-specific Jacobians (in expectation) from cross-block score products under additive-noise dynamics (\cref{thm:jacobian_main}). 
  \item \textbf{Dynamic connectome identification.} Applying SBTG to whole-brain \emph{C.~elegans} calcium imaging, we recover lag-specific directed structure that aligns with independent connectomes and reveals biologically meaningful organization across time scales.
\end{enumerate}

\section{Related Work}

\textbf{Neural Population Connectivity.} Large-scale extracellular probes and dense optical imaging have shifted systems neuroscience from single-unit tuning to population interactions, motivating methods that respect both the local geometry and the directed temporal structure of cortical dynamics. Empirical work suggests that information exchange across areas is confined to low-dimensional ``communication subspaces,'' often expressed over specific temporal delays, so that effective coupling is expressed in constrained directions of population activity rather than uniformly across neurons~\citep{Semedo2019}. Meanwhile, surveys in mouse visual cortex show functional hierarchies consistent with anatomical pathways~\citep{Siegle2021}, and classic EM-plus-physiology studies link response tuning to local microcircuit wiring~\citep{Bock2011}. This motivates learning \emph{time-directed}, lag-specific interactions rather than static correlations. However, most existing population-level approaches either summarize interactions at a single timescale or rely on parametric dynamical assumptions, limiting their ability to resolve multi-timescale circuit structure.

\textbf{Time-Series Structure Learning.} Classical approaches include linear VAR models and Granger causality, which are well understood but struggle when effects are nonlinear and conditioning sets are large~\citep{Granger1969,Lutkepohl2005}. Point-process models such as Hawkes processes capture spike-train dynamics yet impose strong parametric assumptions and require careful regularization in high dimension~\citep{Hawkes1971}. Constraint-based discovery methods adapted to time series (e.g., PCMCI$^+$) explicitly handle autocorrelation and reduce conditioning sets, but still rely on conditional-independence testing and can face power/computation challenges as variables and lags grow~\citep{Runge2020PCMCIplus}. Multivariate transfer entropy can capture nonlinear, lagged dependencies by measuring the predictive information a source adds conditional on a target, but fails to directly correspond to a differential effect under a nonlinear, additive noise model~\citep{Novelli2019}. DAG optimization methods and time-series variants (e.g., Dynotears) are effective under restrictive assumptions but are not designed for high-dimensional nonlinear population dynamics without strong priors~\citep{pamfil2020dynotearsstructurelearningtimeseries,Zheng2018,Yu2019}. Recent deep-learning approaches frame connectivity inference as latent-variable graph inference (NRI~\citep{Kipf2018NRI}), low-rank dynamical reconstruction (LINT~\citep{Valente2022LINT}), or transformer-inspired recovery (NetFormer~\citep{Lu2025NetFormer}); we benchmark against all three on the empirical \emph{C.~elegans} setting in Appendix~\ref{app:dl_baselines}.

\looseness=-1
\textbf{Score-Based Modeling for Structure.} Denoising score matching and diffusion-style models estimate the score $\nabla \log p(x)$, enabling scalable modeling without normalized likelihoods~\citep{Hyvarinen2005,Vincent2011,SongErmon2019,Song2021}. Beyond images, these models are increasingly aligned with neural data, motivating the use of scores as a representation of local geometry of neural window distributions~\citep{Ozcelik2023,Kapoor2024}. Recent score-based causal discovery approaches target i.i.d.\ settings or specific families of perturbed systems~\citep{varici2023,MontagnaBerrett2023,Zhu2023}. We depart from this by targeting lagged, multi-stimulus neural time series via joint-window score geometry and by converting scores into statistically \emph{calibrated, directed} edge tests with lag interpretation.

\section{Problem Statement}
\label{sec:problem}

We now formalize the inference problem, introducing multi-lag neural dynamics, the role of the coupling Hessian, and the practical challenges posed by partial observability across multiple recordings.

\textbf{Multi-Lag Neural Dynamics.} We model population neural activity as a discrete-time stochastic process observed at a fixed sampling rate. Let $\x_t \in \R^n$ denote the activity of $n$ neurons at time $t$. In our \emph{C.~elegans} data, the sampling rate is 4 Hz (4 3D-image volumes/second), so each time step corresponds to 250 ms---a timescale relevant for both fast chemical synapses and slower modulatory effects. Our central modeling assumption is that the future state depends on a finite history of past states through a nonlinear transition map with additive noise $\x_{t+1} = g(\x_t, \x_{t-1}, \ldots, \x_{t-L+1}) + \eps_t,$
where $g: \R^{Ln} \to \R^n$ is a differentiable transition function encoding the circuit's computational structure, $L$ is the lag order, and $\eps_t \sim \mathcal{N}(0, \Sigeps)$ represents independent stochastic neural variability.

The directed circuit structure at each lag is encoded in the Jacobian blocks of the transition map:
\begin{equation}
\J_\ell := \frac{\partial g}{\partial \x_{t+1-\ell}} \in \R^{n \times n},
\end{equation}
where entry $(\J_\ell)_{ji}$ quantifies the local, instantaneous influence of neuron $i$ at time $t{-}\ell{+}1$ on neuron $j$ at time $t{+}1$. We aim to recover the support and sign pattern of the lag-specific Jacobians from observed time series, identifying which directed interactions are active and at what timescales.

\textbf{The Omitted-Lag Problem.} A natural first approach is to analyze pairs of consecutive observations $(\x_t, \x_{t+1})$. However, when $L \ge 2$, this pair-window approach suffers from an \emph{omitted-lag bias}: the inferred coupling conflates multiple lag effects through the autocorrelation structure of the process. To see this concretely, consider a scalar VAR(2) process:
$
x_{t+1} = a_1 x_t + a_2 x_{t-1} + \varepsilon_t.
$
From pairs $(x_t, x_{t+1})$ alone, the best linear predictor is $\E[x_{t+1}|x_t] = (a_1 + a_2 b)x_t$ where $b = \Cov(x_{t-1}, x_t)/\Var(x_t)$. The reduced-form coefficient $a_1 + a_2 b$ mixes both lag effects---we cannot separately identify $a_1$ from pairs alone. This reduced-form mixing of effects generalizes to the multivariate nonlinear case: pairs cannot separate lag-specific causal effects. Our multi-block window construction directly addresses this issue by conditioning on all intermediate time points (i.e., we can then analyze $\E[x_{t+1}|x_t, x_{t-1}] = a_1x_t + a_2 x_{t-1}$), enabling clean identification of each $\J_\ell$.

\textbf{The Coupling Hessian: Geometric Intuition.} For a window $\z = (\x_t, \x_{t+1})$, the score function $\s(\z) = \nabla_{\z} \log p(\z)$ describes how log-probability changes as we move in window-space. Geometrically, the score points toward higher-density regions; its magnitude reflects the local curvature of the probability landscape. Now consider the cross-block component for window $\z = (\x_t, \x_{t+1})$: how does the score with respect to the future $\x_{t+1}$ depend on the past $\x_t$? This dependence is captured by the mixed Hessian $\nabla_{\x_{t+1}} \nabla_{\x_t}^\top \log p(\z)$---a measure of how the probability surface ``tilts'' between past and future. Steeper tilt means stronger influence. And concretely, we have that $\nabla_{\x_{t+1}} \nabla_{\x_t}^\top \log p(\x_t, \x_{t+1}) = \nabla_{\x_{t+1}} \nabla_{\x_t}^\top \log p(\x_{t+1} | \x_t)$, since the $\log p(\x_t)$ term vanishes under $\nabla_{\x_{t+1}}$. This highlights that the cross-block curvature is determined by the local transition structure.

By the Score-Hessian Identity, score products allow us to estimate this cross-block curvature:
\begin{equation}
\E[\s_{t+1}(\z) \, \s_t(\z)^\top] = -\E[\nabla_{\x_{t+1}} \nabla_{\x_t}^\top \log p(\z)].
\label{eq:score_hess_main_text}
\end{equation}
Under the additive-noise dynamics 
this mixed Hessian equals a noise-scaled Jacobian, $\Sigeps^{-1} \J_1$, so that the cross-block score product recovers $-\Sigeps^{-1}\J_1$ in expectation via \eqref{eq:score_hess_main_text}. Accordingly, we interpret signs using the sign-aligned coupling $-\E[\s_{t+1}(\z)\s_t(\z)^\top]$.

\textbf{Multiple Recordings and Partial Observability.} Neural recordings rarely observe all neurons. In \emph{C.~elegans}, each recording observes a different subset $O_u \subset [N]$ of the canonical neuron set due to imaging geometry and neural identification challenges. Across our 21 recordings, individual neurons appear in 12--20 worms (median 18 worms per neuron), and only 189 head neurons and 42 tail neurons of the 302 canonical neurons are reliably identified \cite{Yemini2021}. This partial overlap creates both a statistical challenge (how to aggregate evidence across recordings) and an identifiability question (what can we recover).

When the observed neurons $O$ are a proper subset of the full population, any inferred coupling is necessarily a reduced-form quantity that reflects both direct effects and mediation through hidden neurons. Formally, let $\tilde{g}_O$ denote the effective transition function governing the observed marginal dynamics. SBTG identifies $\E[\J_{\tilde{g}_O, \ell}]$---the expected Jacobian of this reduced-form map. In \cref{thm:unified_new}, we show that this decomposes into the conditional expectation of the true Jacobian plus a bias term that depends on how hidden neurons correlate with observed dynamics. 
For \emph{C.~elegans}, this bias is substantially mitigated by the organism's small nervous system: with only 302 neurons total and our coverage of 189 reliably identified head neurons, we observe a substantial fraction of the circuit. Moreover, the stereotyped wiring of \emph{C.~elegans}---documented by electron microscopy reconstructions~\citep{white1986structure, Cook2019, witvliet2021connectomes}---means that the different recordings provide complementary views of the same underlying circuit. Under this shared-dynamics assumption, aggregating evidence across recordings yields a consistent estimator of population-level coupling.

\section{Methods: Score--Block Time Graphs}
\label{sec:method}

The preceding section identifies what SBTG should recover (lag-specific Jacobians) and what can go wrong (omitted-lag bias, hidden-neuron confounding). We now describe how to translate these insights into a practical pipeline. The core workflow consists of learning a score model for minimal multi-block windows, computing cross-block score products, and testing whether these products are significantly nonzero after correcting for temporal dependence. Several design choices are critical, and we motivate them below.

\textbf{Minimal Multi-Block Windows.} Separation of lag-specific effects requires conditioning on all intermediate time points. For each target lag $\ell$, we therefore construct a minimal multi-block window: $\z^{(\ell)}_t = (\x_t, \x_{t+1}, \ldots, \x_{t+\ell}) \in \R^{(\ell+1)n}.$
This construction conditions on all intermediate time points while allowing per-lag hyperparameter tuning and model selection. The window dimension grows linearly with lag ($2n$ for $\ell = 1$, $3n$ for $\ell = 2$, etc.), inducing an explicit tradeoff between statistical efficiency and conditioning completeness. In short, minimal multi-block windows provide clean lag separation without unnecessary dimensional growth.


\textbf{Structured Score Model.} Given these windows, we require a score model that captures the joint distribution's local geometry while remaining computationally tractable in high dimension. We parameterize the score through an energy function that decomposes into within-block and cross-block terms. Write the window as $\z_t = (\z^{(0)}_t,\z^{(1)}_t,\ldots,\z^{(\ell)}_t)$ where $\z^{(k)}_t := \x_{t+k}\in\R^n$ denotes the $k$-th time block. We use:
\begin{equation}
U_\theta(\z_t) = \sum_{k=0}^{\ell} g_k(\z^{(k)}_t) + \sum_{r=1}^{\ell} (\z^{(\ell)}_t)^\top W_r \, \z^{(\ell-r)}_t,
\label{eq:score_model}
\end{equation}
where $g_k$ are multi-layer perceptrons capturing within-block structure and $W_r \in \R^{n \times n}$ are explicit coupling matrices between time blocks. The score is $\widehat\s_\theta(\z_t) = -\nabla_{\z_t} U_\theta(\z_t)$. This structured form balances expressiveness with interpretability and statistical efficiency: the MLPs can capture complex marginal structure within each time block, while the coupling matrices $W_r$ force the DSM objective to learn the correct cross-block influence. This is a deliberate bias--variance choice: the model class is more constrained than a fully generic score network, but it is better aligned with the downstream goal of recovering interpretable lag-specific interactions in data-limited recordings. In particular, the cross-partial $\partial^2 U_\theta / \partial \z^{(\ell-r)}_t \, \partial \z^{(\ell)}_t = W_r$, so each explicit coupling matrix $W_r$ acts directly as a directed adjacency from time block $\ell{-}r$ to time block $\ell$, rather than being implicit in the weights of a generic black-box network.
Training minimizes the denoising score matching objective:
\begin{equation}
\mathcal{L}_{\mathrm{DSM}}(\theta) = \E\left[\left\|\widehat\s_\theta(\z + \sigma\eps) + \eps/\sigma\right\|^2\right],
\end{equation}
where $\eps \sim \mathcal{N}(0, I)$ and $\sigma$ is a noise level hyperparameter. This objective coincides with the score matching objective used in state-of-the-art generative diffusion models. Intuitively, the target $-\eps/\sigma$ is the score of the Gaussian corruption kernel applied to $\z_t$; the denoising identity~\citep{Vincent2011} then guarantees that minimizing this loss estimates the score of the underlying joint-window distribution of neural activity, without ever requiring its normalizing constant.

\textbf{Null-Contrast Hyperparameter Selection.} Hyperparameter selection for structure learning is subtle, particularly for high-dimensional biological neural time series. The DSM validation loss measures score-function accuracy but does not correlate with edge-recovery performance: a model may achieve low DSM loss while learning spurious correlations that do not correspond to directed circuit interactions. Using ground-truth edges for tuning would introduce circularity and defeat the purpose of the model. We address this by tuning to maximize the ``null contrast'':
\begin{equation}
\mathrm{NC} := \frac{\mathrm{mean}(|\widehat\mu_\ell|_{j \neq i})}{\mathrm{mean}(|\widehat\mu_\ell^{\mathrm{null}}|_{j \neq i})},
\label{eq:null_contrast}
\end{equation}
where $\widehat\mu_{\ell,ji} = \frac{1}{T_{\mathrm{win}}} \sum_{t} \widehat\s_{t+\ell,j}(\z^{(\ell)}_t) \cdot \widehat\s_{t,i}(\z^{(\ell)}_t)$ is the lag-$\ell$ cross-block score product (matrix entry $(j,i)$, averaged over windows), and $\widehat\mu_\ell^{\mathrm{null}}$ is the same statistic computed after temporally permuting lag-block scores, breaking the true joint structure while preserving marginal properties. Higher null contrast indicates a stronger signal relative to a distribution-breaking null, without requiring edge labels. This objective rewards hyperparameter settings that find notably reproducible structure---structure that we then validate externally against known knowledge of the synaptic and neuromodulator connectomes \cite{white1986structure, Cook2019, bentley2016multilayer}. 

\textbf{Inference under Temporal Dependence.} To avoid overfitting bias, we use 5-fold cross-fitting: the score model $\widehat\s_\theta$ is trained on 80\% of windows, and all edge statistics reported are computed on held-out data. However, temporal dependence introduces a second inference challenge. The product series $Y_t := \widehat\s_{t+\ell,j}(\z^{(\ell)}_t) \cdot \widehat\s_{t,i}(\z^{(\ell)}_t)$ inherits autocorrelation from the overlapping window structure. Standard errors that assume independent observations would overstate significance---in our data, by a factor of three. We use Newey--West HAC estimators with bandwidth calibrated to window size and sampling rate, incorporating lagged autocovariances:
$\widehat{\sigma^2_{\mathrm{NW}}} = \widehat\gamma_0 + 2\sum_{h=1}^m \left(1 - \frac{h}{m+1}\right) \widehat\gamma_h$
where $\widehat\gamma_h$ is the lag-$h$ sample autocovariance and bandwidth $m$ controls how many lags to include.\cite{NeweyWest1987}

With $n=80$ neurons, we test $n(n-1)=6{,}320$ directed edges per lag, making significance testing essential.
We use Benjamini--Yekutieli (BY) because it provides finite-sample False Discovery Rate (FDR) control under arbitrary dependence among test statistics~\citep{BenjaminiYekutieli2001}. Since our $p$-values derive from temporally overlapping windows and a shared score model, they exhibit unknown dependence structures. The BY test provides principled FDR control under overlapping windows. We conduct a sensitivity analysis to arrive at our chosen sensitivity of $\alpha = 0.10$ in Figure~\ref{app:sensitivity-analysis}. Together, cross-fitting and HAC inference yield calibrated edge statistics despite temporal dependence. The complete SBTG procedure is summarized as pseudocode in Algorithm~\ref{alg:sbtg} (\cref{app:implementation}).

\label{sec:theory}
\textbf{Theoretical Guarantees.} We now state the core identification results; the complete formal treatment with explicit assumptions and proofs appears in \cref{app:model}. Under the additive-noise dynamics
\begin{equation}
\y_{t+1} = f(\y_t, \ldots, \y_{t-L+1}) + \bm{\eta}_t,
\qquad
\bm{\eta}_t \sim \mathcal{N}(0, \bm{\Omega}),
\end{equation}
we show that cross-block score products computed from joint-window score models recover lag-specific Jacobian structure in expectation.

Fix a lag $\ell \in \{1,\dots,L\}$ and define the minimal multi-block window $\z_t := (\y_t,\y_{t+1},\ldots,\y_{t+\ell})$. Let $\s_\tau(\z_t) := \nabla_{\y_\tau}\log p(\z_t)$ denote the score component with respect to the block at time $\tau$. Our central summary statistic is the cross-block score product matrix
\begin{equation}
\bm{M}_\ell
\;:=\;
\E\!\left[\s_{t+\ell}(\z_t)\, \s_t(\z_t)^\top\right]
\in \R^{N\times N}.
\end{equation}
The following theorem states that $\bm{M}_\ell$ is proportional to the lag-$\ell$ Jacobian of the underlying transition map, up to noise scaling.

\begin{theorem}[Lag-Specific Jacobian Recovery]
\label{thm:jacobian_main}
For $\ell \in \{1,\dots,L\}$, let $\bm{F}_\ell := \frac{\partial f}{\partial \y_{t+1-\ell}} \in \R^{N\times N}$ denote the lag-$\ell$ Jacobian of the one-step transition map. Then the cross-block score product matrix satisfies
\begin{equation}
\bm{M}_\ell \;=\; -\,\bm{\Omega}^{-1}\,\E[\bm{F}_\ell].
\label{eq:jacobian_main}
\end{equation}
\end{theorem}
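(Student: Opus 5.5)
The plan is to reduce $\bm{M}_\ell$ to an expected mixed Hessian of $\log p(\z_t)$ using the Score--Hessian identity \eqref{eq:score_hess_main_text}, and then to compute that Hessian from the additive-noise transition density.

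\textbf{Step 1 (Score--Hessian identity).} For blocks $a=t+\ell$ and $b=t$, write
\[
\E[\s_a\s_b^\top]=\int \nabla_{\y_a}\log p\,(\nabla_{\y_b} p)^\top\,d\z .
\]
Integrating by parts in $\y_b$ gives
\[
-\int p\,\nabla_{\y_a}\nabla_{\y_b}^\top\log p\,d\z=-\E[\nabla_{\y_{t+\ell}}\nabla_{\y_t}^\top\log p(\z_t)].
\]
The boundary terms vanish under the appendix's regularity assumptions: smooth density and tails such that $p\,\nabla\log p\to 0$ at infinity.

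\textbf{Step 2 (factorization).} Write $p(\z_t)=p(\y_t,\ldots,\y_{t+\ell-1})\,p(\y_{t+\ell}\mid \y_t,\ldots,\y_{t+\ell-1})$. The first factor has no $\y_{t+\ell}$ dependence, so it is annihilated by $\nabla_{\y_{t+\ell}}$. Only the conditional transition density contributes to the mixed Hessian.

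\textbf{Step 3 (Gaussian case).} Suppose the window contains all the arguments of $f$. This holds in the case $\ell=L$, or under the appendix's assumption that the history before $t$ does not enter given the window. Then
\[
\log p(\y_{t+\ell}\mid\cdot)=-\tfrac12 \bm{r}^\top\Omeg^{-1}\bm{r}+\text{const},\qquad \bm{r}=\y_{t+\ell}-f(\y_{t+\ell-1},\ldots,\y_t,\ldots).
\]
Differentiating gives $\nabla_{\y_{t+\ell}}\log p=-\Omeg^{-1}\bm{r}$. Applying $\nabla_{\y_t}^\top$ then gives $\Omeg^{-1}\,\partial f/\partial\y_t$. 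Since $\y_t$ enters the one-step map for $\y_{t+\ell}$ at lag position $\ell$, this is $\Omeg^{-1}\bm{F}_\ell$. Combining with Step 1 yields $\bm{M}_\ell=-\Omeg^{-1}\E[\bm{F}_\ell]$, as in \eqref{eq:jacobian_main}.

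\textbf{Step 4 (main obstacle: lags reaching before the window).} If $\ell<L$, then $f$ also depends on unobserved lags $\bm{w}=(\y_{t-1},\ldots)$. The conditional density is then a mixture $\int \phi_{\Omeg}(\y_{t+\ell}-f(\cdot,\bm{w}))\,p(\bm{w}\mid \y_t,\ldots)\,d\bm{w}$. Its mixed Hessian picks up extra posterior-covariance terms.

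To handle this, I would avoid the Hessian and work with scores directly.

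\begin{enumerate}
\item By Fisher's identity, $\s_{t+\ell}(\z)=-\Omeg^{-1}\E[\bm{\eta}\mid\z]$, where $\bm{\eta}=\y_{t+\ell}-f$.
\item Since $\s_t$ is $\z$-measurable, the tower property gives $\bm{M}_\ell=-\Omeg^{-1}\E[\bm{\eta}\,\s_t(\z)^\top]$.
\item Write $\s_t(\z)=\nabla_{\y_t}\log p(\z,\bm{w})-\nabla_{\y_t}\log p(\bm{w}\mid\z)$.
\item For the joint-score term, Stein's identity in $\y_t$ gives $\E[\bm{\eta}\,\nabla_{\y_t}\log p(\z,\bm{w})^\top]=-\E[\partial\bm{\eta}/\partial\y_t]=\E[\bm{F}_\ell]$.
\item The remaining term $\E[\bm{\eta}\,\nabla_{\y_t}\log p(\bm{w}\mid\z)^\top]$ must be shown to vanish.
\end{enumerate}

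Showing that last term vanishes is where I expect the difficulty. I would first try to exploit that $\bm{\eta}$ is independent of $(\bm{w},\y_t,\ldots,\y_{t+\ell-1})$. If that is insufficient, I would invoke the appendix's modeling assumption: the window is Markov-sufficient for $\y_{t+\ell}$, so $\bm{F}_\ell$ is evaluated along the conditional law. That assumption makes the correction identically zero and yields \eqref{eq:jacobian_main}.
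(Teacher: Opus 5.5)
\textbf{The gap is in Step 4, and it cannot be closed.} Your Steps 1--3 are the paper's proof: the score--Hessian identity by integration by parts, the factorization that removes $\log p(\y_{t:t+\ell-1})$, and the mixed derivative of $-\tfrac12\bm{r}^\top\Omeg^{-1}\bm{r}$. As you note, this argument is valid only when $\y_t,\ldots,\y_{t+\ell-1}$ contain every argument of $f$, for example when $\ell=L$. In Step 4 the term $\E[\bm{\eta}\,\nabla_{\y_t}\log p(\bm{w}\mid\z)^\top]$ is in general nonzero. The independence of $\bm{\eta}$ from $(\bm{w},\y_{t:t+\ell-1})$ does not help, because $\nabla_{\y_t}\log p(\bm{w}\mid\z)$ depends on $\y_{t+\ell}=f+\bm{\eta}$.

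Your own decomposition computes the discrepancy. Take the scalar AR(2) process $y_{t+1}=a_1y_t+a_2y_{t-1}+\eta_t$ with $\eta_t\sim\mathcal{N}(0,\omega)$, and set $\ell=1<L=2$.
\begin{itemize}
\item By time reversibility of the stationary Gaussian law, $y_{t-1}\mid(y_t,y_{t+1})\sim\mathcal{N}(a_1y_t+a_2y_{t+1},\omega)$.
\item Hence $\partial_{y_t}\log p(y_{t-1}\mid y_t,y_{t+1})=a_1(y_{t-1}-a_1y_t-a_2y_{t+1})/\omega$.
\item Since $\E[\eta_t\,y_{t+1}]=\omega$ and $\eta_t$ is independent of $(y_t,y_{t-1})$, the correction term equals $-a_1a_2$.
\end{itemize}
Therefore
\[
M_1=-\frac{a_1(1+a_2)}{\omega}\neq-\frac{a_1}{\omega}\quad\text{whenever } a_1a_2\neq 0 .
\]
You can check the same value directly from the $2\times2$ precision matrix of $(y_t,y_{t+1})$. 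That gives $M_1=-\gamma_1/(\gamma_0^2-\gamma_1^2)$, which simplifies to the value above via Yule--Walker. So the statement is false for $\ell<L$ under the stated assumptions. This is just the omitted-lag bias of Section~\ref{sec:problem}, since the lag-1 window is exactly the pair window.

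\textbf{The fallback assumption is not in the paper.} The ``Markov-sufficiency'' assumption you would invoke is not among Assumptions~\ref{ass:dynamics_new}--\ref{ass:moments}. It is an extra hypothesis, amounting to $\bm{F}_k\equiv\bm{0}$ for all $k>\ell$, and it holds automatically only when $\ell=L$. The paper's proof has the same hole. It sits in the line asserting $\y_{t+\ell}\mid\y_{t:t+\ell-1}\sim\mathcal{N}(f(\y_{t+\ell-1},\ldots,\y_{t+\ell-L}),\Omeg)$, whose mean involves blocks outside the conditioning set when $\ell<L$.

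\textbf{What is actually provable.} Steps 1--3 establish the identity for $\ell=L$. For $\ell<L$ it holds under the added hypothesis above, or for every $\ell$ if the window is extended back to $(\y_{t+\ell-L},\ldots,\y_{t+\ell})$. Step 4 is worth keeping as an exact bias formula:
\[
\bm{M}_\ell=-\Omeg^{-1}\E[\bm{F}_\ell]+\Omeg^{-1}\E\bigl[\bm{\eta}\,\nabla_{\y_t}\log p(\bm{w}\mid\z)^\top\bigr].
\]
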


Equation~\eqref{eq:jacobian_main} formalizes the central link used by SBTG: score geometry of joint windows reveals directed coupling. The proportionality to $\bm{\Omega}^{-1}$ implies that neurons with larger innovation variance contribute weaker score signals. In particular, when $\bm{\Omega}$ is diagonal, $\bm{\Omega}^{-1}$ acts as a positive row-wise rescaling, preserving the sparsity pattern. 
Moreover, \eqref{eq:jacobian_main} reduces to a global sign flip between $\bm{M}_\ell$ and the average Jacobian. Accordingly, we work throughout with the sign-aligned coupling matrix
$\bm{C}_\ell \;:=\; -\,\bm{M}_\ell$
so that $\mathrm{sign}((\bm{C}_\ell)_{ji}) = \mathrm{sign}(\E[(\bm{F}_\ell)_{ji}])$ under diagonal noise, matching the intended directed effects interpretation.

The structured estimator of Eq.~\eqref{eq:score_model} is a practical realization of the population object that Theorem~\ref{thm:jacobian_main} identifies; we elaborate this distinction in Appendix~\ref{app:pop_vs_est}. Further, when only a subset $O\subset[N]$ of neurons is observed, the same construction applied to the marginal window distribution recovers observed-to-observed coupling up to an explicit bias term induced by marginalizing hidden neurons. \Cref{thm:unified_new} in \cref{app:model} shows that the observed cross-block score product decomposes into a noise-scaled $\E[\bm{F}_\ell[O,O]]$ term plus a bias matrix that depends on conditional uncertainty in the hidden activity given the observed window. This bias vanishes under standard conditional-independence conditions on the hidden components, and in \emph{C.~elegans} is empirically mitigated by the relatively high fraction of neurons observed across recordings.

\section{Results}
\label{sec:results}

We validate SBTG in synthetic settings and in whole-brain \emph{C.~elegans} calcium imaging. We first assess lag-1 performance against anatomical and functional benchmarks, then demonstrate that SBTG recovers biologically plausible multi-timescale structure not captured by standard methods.

\textbf{Synthetic Validation.} We first verify SBTG’s ability to recover known structure on synthetic data. For instance, we generate nonlinear VAR(2) dynamics as follows:
\begin{equation*}
\x_{t+1} = \tanh(\A_1 \x_t + \A_2 \x_{t-1}) + \eps_t, \quad \eps_t \sim \mathcal{N}(0, 0.1 \cdot I),
\end{equation*}
with $n = 20$ neurons, 10\% edge density, and $T = 5000$ time points in Table~\ref{tab:synthetic}. We benchmark against  linear VAR and DYNOTEARS, as well as nonlinear (PCMCI+) causal discovery methods. We include additional synthetic analyses in the appendix~\ref{app:synthetic_full} for Hawkes, VAR, and Poisson processes, and report a larger-scale ($n{=}80$) extension of the tanh benchmark in Table~\ref{tab:res_tanh_n80}, with VAR and Poisson scaling diagnostics described alongside it (Appendix~\ref{app:n80_scaling}).
SBTG achieves substantially higher classification performance (AUROC 0.83) compared to baselines, most of which hover near chance (0.50--0.60). Notably, even nonlinear methods like Dynotears (AUROC 0.57) and PCMCI+ (0.54) struggle to recover the true graph structure in this regime. This supports our claim that SBTG's score-based estimators effectively capture the nonlinear dynamics where standard approaches fail.

\begin{table}[t]
\caption{\textbf{Synthetic benchmark (Nonlinear Tanh).} SBTG outperforms linear and nonlinear baselines in both binary detection (AUROC, AUPRC) and edge recovery (F1).}
\label{tab:synthetic}

\caption{\textbf{Benchmark performance.} Performance against the Cook structural connectome \citep{Cook2019} and Randi functional atlas \citep{Randi2023}.
AUROC measures discrimination of reference edges; $\rho$ denotes Spearman correlation between inferred scores and reference connection weights. 
SBTG has the highest Spearman correlation on both benchmarks, with Cook AUROC similar to Pearson.}
\label{tab:lag1_baselines}

\centering
\begin{minipage}[t]{0.48\linewidth}
\centering
\begin{scriptsize}
\setlength{\tabcolsep}{3pt}
\begin{tabular}{lccc}
\toprule
Method & F1 Score & AUROC & AUPRC \\
\midrule
\textbf{SBTG} & \textbf{0.39 $\pm$ 0.08} & \textbf{0.83 $\pm$ 0.07} & \textbf{0.24 $\pm$ 0.06} \\
DYNOTEARS & 0.23 $\pm$ 0.11 & 0.57 $\pm$ 0.06 & 0.21 $\pm$ 0.09 \\
VAR-LASSO & 0.21 $\pm$ 0.03 & 0.60 $\pm$ 0.05 & \textbf{0.24 $\pm$ 0.08} \\
PCMCI+ & 0.17 $\pm$ 0.03 & 0.54 $\pm$ 0.05 & 0.10 $\pm$ 0.01 \\
VAR-Ridge & 0.16 $\pm$ 0.02 & 0.56 $\pm$ 0.05 & \textbf{0.24 $\pm$ 0.06} \\
VAR-LiNGAM & 0.08 $\pm$ 0.07 & 0.48 $\pm$ 0.04 & 0.09 $\pm$ 0.01 \\
\bottomrule
\end{tabular}
\end{scriptsize}

\end{minipage}\hfill
\begin{minipage}[t]{0.48\linewidth}
\centering
\begin{scriptsize}
\setlength{\tabcolsep}{3pt}
\begin{tabular}{l cccc}
\toprule
& \multicolumn{2}{c}{\textbf{Cook}} & \multicolumn{2}{c}{\textbf{Randi}} \\
\cmidrule(lr){2-3} \cmidrule(lr){4-5}
Method & AUROC & $\rho$ & AUROC & $\rho$ \\
\midrule
SBTG              & \textbf{0.581} & \textbf{0.202} & \textbf{0.637} & \textbf{0.147} \\
Pearson           & 0.576          & 0.106          & 0.596          & 0.098          \\
Cross Correlation & 0.568          & 0.094          & 0.571          & 0.087          \\
Granger           & 0.548          & 0.066          & 0.602          & 0.098          \\
Glasso            & 0.516          & 0.076          & 0.533          & 0.070          \\
\bottomrule
\end{tabular}
\end{scriptsize}

\end{minipage}
\vspace{-4mm}
\end{table}

\looseness=-1
\textbf{\textit{C.~elegans} Data and Structural Benchmarks.} We evaluate on whole-brain calcium imaging from NeuroPAL-annotated \emph{C.~elegans}~\citep{Yemini2021}: 240-second videos at 4 Hz presenting three sensory stimuli. After preprocessing (dF/F$_0$ normalization, $z$-scoring, and combining left--right neuron pairs), each recording contains $n = 80$ identified neurons. We benchmark inferred edges against three independent references: a structural connectome from electron microscopy~\citep{Cook2019}, a functional atlas from optogenetic perturbation~\citep{Randi2023}, and a monoamine connectome from gene expression~\citep{bentley2016multilayer}. We focus first on lag-1 (250 ms) predictions, which are expected to align most closely with monosynaptic transmission. \Cref{tab:lag1_baselines} reports performance across methods.

We observe that SBTG performs competitively across metrics, but in particular on Spearman correlation (0.202 vs. 0.106), indicating better recovery of connection strengths---the relative importance of edges. However, the performance on binary classification metrics reveals that in aggregate, we are reproducing neither the entire anatomical nor functional connectome. We therefore emphasize two facets of SBTG over existing approaches: learning explicitly directed couplings between neurons and learning such directed effects at varying temporal resolutions. Appendix~\ref{app:baseline_choice} discusses the rationale for the synthetic-vs-empirical baseline split, our emphasis on Spearman correlation under partial-reference ground truth and the agreement between different methods’ edge rankings.




\textbf{Multi-Lag Analysis: Cell-Type Temporal Signatures.} Having established lag-1 performance, we now ask whether SBTG's multi-lag analysis reveals biologically meaningful temporal structure. Rather than interpreting thousands of neuron-to-neuron couplings directly, we summarize the inferred edges by functional cell type---sensory ($n = 28$), interneuron ($n = 34$), and motor ($n = 11$)---and track how directed coupling strength redistributes across lags (\cref{fig:multilag}; see \cref{app:celltype-methods} in the Appendix for details). Across phases, the dominant signal concentrates at short lags, consistent with fast local computation. This also provides an internal check that SBTG is separating lag structure rather than merely reflecting shared marginal autocorrelation.

\begin{figure}[ht]
\centering
\begin{minipage}[c]{0.58\linewidth}
    \centering
    \includegraphics[width=\linewidth]{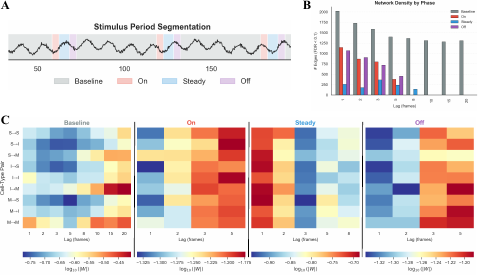}
\end{minipage}\hfill
\begin{minipage}[c]{0.40\linewidth}
    \centering
    \includegraphics[width=\linewidth]{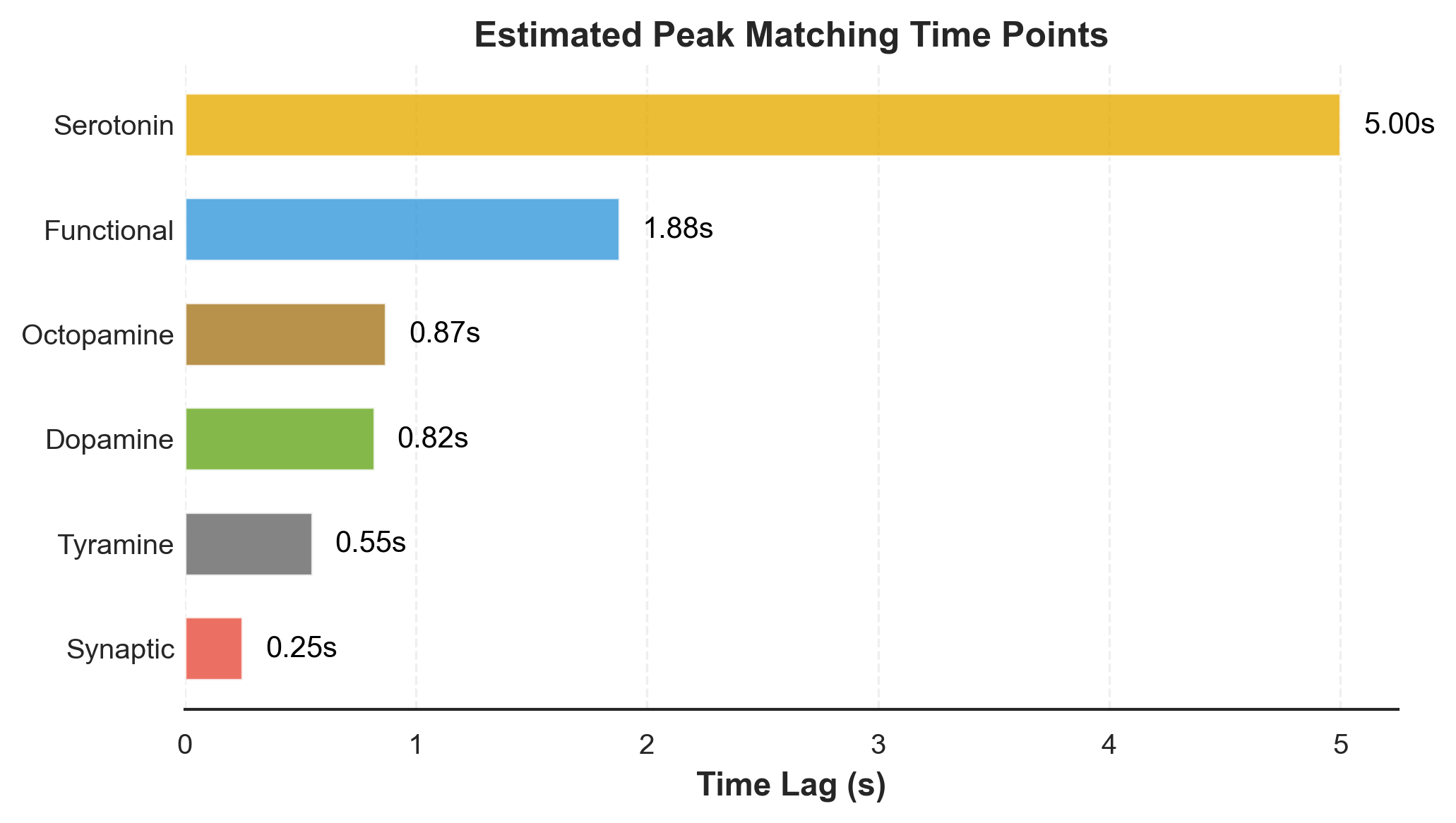}
\end{minipage}

\caption{\textbf{Stimulus Phase Analysis.} \textbf{A}: Stimulus phase segmentation used for analysis (baseline, on, steady, off).
\textbf{B}: Number of FDR-significant directed edges by lag within each phase.
\textbf{C}: Cell-type coupling strength versus lag within each phase, summarizing directed interactions between sensory (S), interneuron (I), and motor (M) groups.}
\label{fig:multilag}

\caption{\textbf{Estimated peak matching time points.}
Bar lengths indicate the interpolated time lag ($t_{\text{peak}}$) where the F1 score between the SBTG model's prediction and the respective ground-truth network is maximized.
Monoaminergic interactions (e.g., serotonin) and correlated functional activity peak at significantly longer timescales compared to the fast structural backbone.}
\label{fig:peak_lags}
\vspace{-4mm}
\end{figure}

We then refine the analysis by stratifying windows according to stimulus \emph{phase} (\cref{fig:multilag}A): Baseline (no stimulus), On (immediately after onset), Steady (sustained stimulus), and Off (immediately after removal). As expected from increasing window dimension and shrinking effective sample size, the number of FDR-significant edges decreases with lag within every phase (\cref{fig:multilag}B). However, despite On/Off being short (10 s) and therefore far more data-limited than Baseline/Steady, their inferred coupling magnitudes shift most strongly toward longer lags (\cref{fig:multilag}C). This is consistent with the hypothesis that dominant computations occur at stimulus transitions: onset and offset perturb the circuit, and their effects propagate over a longer integration horizon even when fewer windows are available \cite{kato2014temporal, ferkey2021chemosensory}.

The sustained phases exhibit the complementary signature. During stimulus presentation, coupling is strongest at short lags and attenuates as lag increases (\cref{fig:multilag}C), indicating that once a stimulus-driven regime is established, interactions concentrate at short latency rather than accumulating at long delays. Non-stimulus presentation is comparatively quiet at short lags but picks up structure at the longest lags (\cref{fig:multilag}C), consistent with baseline windows near the end of a trial that sit immediately before the next onset, so that only long-lag statistics ``reach into'' the upcoming transition.

Taken together, these results support a phase-resolved picture that is difficult to obtain from single-lag methods: onset and offset preferentially express longer-lag coupling, sustained stimulation concentrates coupling at short lags, and baseline periods show weak short-lag structure while registering long-lag changes near upcoming transitions. 


\looseness=-1
\textbf{Synaptic and Neuromodulatory Temporal Profiles.} Published biological results provide a testbed for SBTG (Figure \ref{fig:overview}E). Electrical and chemical synapses are known to traffic the fastest signals \cite{bargmann2012beyond, bargmann2013connectome}. Monoaminergic neuromodulators mediate slower state transitions with one important exception: tyramine signals a time-critical fast escape response through an ionotropic receptor (a ligand-gated ion channel) \cite{pirri2009tyramine}. Moreover, ionotropic channels, which open in direct response to ligand binding, signal faster than their metabotropic counterparts, wherein ligand binding triggers rate-limiting second-messenger systems to open separate ion channels \cite{hobert2013neuronal}. The published synaptic \cite{Cook2019}, monoamine \cite{bentley2016multilayer}, and GABA \cite{Yemini2021} connectomes permitted us to test SBTG for these biological ground truths.

For each neurotransmitter and each lag, we treat the inferred coupling magnitudes as edge scores and report AUROC/AUPRC against the corresponding reference adjacency (\cref{tab:peak_lags_appendix}). Directly inline with biological predictions, synaptic connectivity marks the fastest peak of activity, followed by tyramine as the second fastest signal, with the remaining slower acting neuromodulators peaks thereafter. Notably, dopamine suppresses octopamine signaling \cite{suo2009dopamine}, which may explain why it slightly precedes octopamine in timing. Serotonin, which in the head stimulates pharyngeal pumping in response to food \cite{avery1990effects, dag2023dissecting}, exhibited a very long lag consistent with the absence of food and pharyngeal pumping in the neural activity experiments. The Randi et al.\ functional atlas reflects its mixture of the aforementioned signaling pathways. Lastly, as expected, ionotropic GABA signaling was faster than its metabotropic counterpart (\cref{app:chem_gap}).

Notably, these temporal distinctions emerge without transmitter-specific modeling assumptions: SBTG is fit once per lag to activity windows, and transmitter identity enters only at evaluation time through an external reference graph. Taken together, our results suggest that lag-resolved score geometry can reveal meaningful timescale signatures in population dynamics, complementing the synaptic-scale (lag-1) structure emphasized by standard functional-connectivity analyses.

\section{Discussion}
\label{sec:discussion}

\looseness=-1
We studied the problem of recovering directed, lag-specific neural interactions from population recordings under nonlinear dynamics, without assuming a parametric transition model. Our central claim is that the local geometry of joint window distributions can be systematically converted into statistically testable circuit hypotheses. Score--Block Time Graphs (SBTG) instantiate this theory by transforming joint-window score geometry into signed, lag-resolved edge tests, linking high-dimensional activity trajectories to interpretable statements about \emph{who influences whom, and on what timescale}.

This pipeline is justified by a straightforward identification mechanism. For the minimal multi-block window $\z_t=(\y_t,\y_{t+1},\ldots,\y_{t+\ell})$, cross-block score products recover the lag-$\ell$ Jacobian in expectation under additive-noise dynamics, up to a noise scaling and a global sign flip. This motivates the sign-aligned coupling $\bm{C}_\ell := -\,\E[\s_{t+\ell}(\z_t)\s_t(\z_t)^\top]$. Conditioning on intermediate time points is essential: when $L\ge2$, pair-window estimates conflate lag effects through temporal dependence (an omitted-lag bias), whereas minimal multi-block windows separate timescales by construction. Cross-fitting, HAC standard errors, BY control, and null-contrast tuning make this identification practically usable in short, temporally dependent recordings.

Several limitations bound the interpretation of our results. Our sampling rate (4 Hz) cannot resolve sub-250 ms dynamics: cascades that unfold over multiple observed frames induce lag-specific dependence and are therefore recoverable by SBTG, whereas mechanisms completing within a single 250 ms bin are absorbed into the one-step sampled transition and are not separately identifiable from a concurrent pattern at this rate. We use only 189 head and 42 tail neurons of the 302 in \textit{C. elegans}, so observed couplings may include a reduced-form bias induced by hidden activity, although this is partially mitigated by relatively high coverage and cross-worm aggregation. Power decays with lag because window dimension grows and effective sample size shrinks, and our external references (Cook, Randi, and Bentley connectomes) are incomplete proxies for task- and state-dependent effective connectivity. Finally, our exact-identification result (Theorem~\ref{thm:jacobian_main}) requires additive-noise dynamics with the stationarity / moment conditions of Assumption~\ref{ass:moments}; outside that regime SBTG should be read as estimating a reduced-form effective-coupling object rather than the literal Jacobian, an interpretation we elaborate in Appendix~\ref{app:noise_assumptions}.

\looseness=-1
Within these constraints, the results support a high-level conclusion: SBTG can identify \emph{which lags matter} in a regime where data bottlenecks are severe. At lag 1, SBTG improves rank alignment with anatomical structure while producing signed, directed hypotheses. More importantly, the multi-lag, phase-stratified analysis indicates that timescale signatures are regime-dependent: onset/offset phases shift coupling toward longer lags despite being the most data-limited, while sustained stimulation concentrates coupling at short latency. Transmitter-specific evaluation further shows distinct lag profiles, with monoamines and metabotropic signaling shifting to longer-lag alignment relative to faster synaptic and ionotropic systems. Together, these findings suggest that lag-resolved score geometry can reveal multi-timescale organization in the state dynamics of neural populations.

\section*{Acknowledgments}
L.D. was supported by the European Commission's Marie Skłodowska-Curie Action Grant agreement no. 101207931.E.Y. was funded by the Esther A. \& Joseph Klingenstein Fund, the Simons Foundation, and the Hypothesis Fund.

\bibliography{refs}
\bibliographystyle{plainnat}

\appendix

\section{Reproducibility statement}
\label{app:reproducibility}

All experiments were run on a single NVIDIA RTX 5000 GPU and required between one and two total GPU-days end-to-end.

\section{Theoretical Model and Results}
\label{app:model}

\subsection{Notation}
We use the following conventions throughout:

\begin{center}
\begin{tabular}{cl}
\toprule
\textbf{Symbol} & \textbf{Meaning} \\
\midrule
$N$ & Total number of neurons in the organism \\
$n$ & Number of observed neurons in a recording ($n \leq N$) \\
$L$ & Maximum lag order in the dynamics \\
$T$ & Number of time points per recording \\
$M$ & Number of recordings (e.g., different worms) \\
$\y_t \in \R^N$ & Full population activity at time $t$ \\
$\x_t \in \R^n$ & Observed neuron activity at time $t$ \\
$O \subset [N]$ & Set of observed neuron indices \\
$\z_t$ & Set of full (Or observed) states learned for fixed lag $\ell$ \\
$H := [N] \setminus O$ & Set of hidden (unobserved) neuron indices \\
$f: \R^{LN} \to \R^N$ & Transition function (dynamics) \\
$\bm{F}_\ell \in \R^{N \times N}$ & Jacobian at lag $\ell$: $(\bm{F}_\ell)_{ji} = \partial f_j / \partial y_{t+1-\ell,i}$ \\
$\bm{\Omega} \in \R^{N \times N}$ & Noise covariance matrix \\
$\bm{s}(\bm{z}) := \nabla_{\bm{z}} \log p(\bm{z})$ & Score function of density $p$ \\
$\mu_{\ell,ji}$ & Cross-block score product: $\mathbb{E}[s_{t+\ell,j} \cdot s_{t,i}]$ \\
\bottomrule
\end{tabular}
\end{center}

\subsection{Model Definition}

\begin{assumption}[Multi-Lag Dynamics with Additive Noise]
\label{ass:dynamics_new}
The full population activity $\y_t \in \R^N$ evolves according to:
\begin{equation}
\y_{t+1} = f(\y_t, \y_{t-1}, \ldots, \y_{t-L+1}) + \bm{\eta}_t,
\label{eq:dynamics_full}
\end{equation}
where:
\begin{itemize}[leftmargin=1.5em]
\item $f: \R^{LN} \to \R^N$ is a differentiable transition function
\item $\bm{\eta}_t \overset{\text{iid}}{\sim} \mathcal{N}(0, \bm{\Omega})$ with $\bm{\Omega} \succ 0$
\item $\bm{\eta}_t$ is independent of $(\y_s)_{s \leq t}$
\end{itemize}
\end{assumption}

\begin{assumption}[Partial Observation]
\label{ass:partial_new}
For each recording $m \in [M]$, we observe $\x^{(m)}_t := \y_t[O_m]$ where $O_m \subset [N]$ is the set of observed neurons. Different recordings may observe different subsets.
\end{assumption}

\begin{assumption}[Shared Dynamics Across Recordings]
\label{ass:shared_new}
The transition function $f$ and noise covariance $\bm{\Omega}$ are identical across all recordings. This reflects species-level conservation of neural circuit structure.
\end{assumption}

\begin{assumption}[Stationarity and Moment Conditions]
\label{ass:moments}
The process $(\y_t)$ is stationary with $\E[\|\y_t\|^4] < \infty$. The stationary density $p(\y)$ is positive and smooth with $\int \|\nabla \log p(\y)\|^2 p(\y) d\y < \infty$.
\end{assumption}

\subsection{Main Theorem: Unified Identifiability}

\begin{theorem}[Unified Lag-Specific Jacobian Identifiability]
\label{thm:unified_new}
Under Assumptions \ref{ass:dynamics_new}--\ref{ass:moments}, define the window $\z_t := (\y_t, \y_{t+1}, \ldots, \y_{t+\ell})$ and the cross-block score product matrix:
$$
\bm{M}_\ell := \E\left[\s_{t+\ell}(\z_t) \, \s_t(\z_t)^\top\right] \in \R^{N \times N},
$$
where $\s_\tau(\z) := \nabla_{\y_\tau} \log p(\z)$ is the score component with respect to time $\tau$.

Then:
\begin{enumerate}[label=\textbf{(\alph*)}, leftmargin=2em]
\item \textbf{Full observation:} If all $N$ neurons are observed,
\begin{equation}
\bm{M}_\ell = -\,\bm{\Omega}^{-1} \E[\bm{F}_\ell].
\label{eq:full_id}
\end{equation}
In particular, $\supp(\bm{M}_\ell) = \supp(\E[\bm{F}_\ell])$. Moreover, if $\bm{\Omega}$ is diagonal then $\bm{C}_\ell := -\bm{M}_\ell$ satisfies $\mathrm{sign}((\bm{C}_\ell)_{ji}) = \mathrm{sign}(\E[(\bm{F}_\ell)_{ji}])$.

\item \textbf{Partial observation:} If we observe $O \subset [N]$ with $|O| = n$, define $\tilde{\bm{M}}_\ell := \E[\tilde{\s}_{t+\ell} \tilde{\s}_t^\top]$ from the marginal score on observed neurons. Then:
\begin{equation}
\tilde{\bm{M}}_\ell = -\,\bm{\Omega}_{OO}^{-1} \E[\bm{F}_\ell[O,O]] + \bm{B}_\ell,
\label{eq:partial_id_new}
\end{equation}
where $\bm{B}_\ell$ is a bias matrix depending on $\Cov(\y[H], \y[O])$. The bias vanishes if $\y[H] \perp\!\!\!\perp \y[O]$ conditional on the past.

\end{enumerate}
\end{theorem}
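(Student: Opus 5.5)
Part (a) reduces to two facts: the score–Hessian identity turns the cross-block score product into an expected mixed Hessian, and the Markov factorization of the window density makes that Hessian depend only on the Gaussian transition kernel of the last block.

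\textbf{Step 1 (score–Hessian identity).} For the block-$\tau$ and block-$\sigma$ components of the window score, with $\tau\neq\sigma$, integration by parts gives
\[
\E[\s_{t+\ell}\s_t^\top] = \int \nabla_{\y_{t+\ell}} p\, (\nabla_{\y_t}\log p)^\top d\z = -\E\bigl[\nabla_{\y_{t+\ell}}\nabla_{\y_t}^\top \log p(\z_t)\bigr].
\]
The boundary terms vanish under Assumption~\ref{ass:moments}, since we have finite Fisher information and a smooth positive density. To justify dropping them I would pass to a sequence of compactly supported cutoffs and use dominated convergence.

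\textbf{Step 2 (factorization).} By Assumption~\ref{ass:dynamics_new} and the Markov property,
\[
\log p(\z_t) = \log p(\y_t,\dots,\y_{t+\ell-1}) + \log p(\y_{t+\ell}\mid \y_{t+\ell-1},\dots,\y_t,\text{earlier past}).
\]
The first term does not involve $\y_{t+\ell}$, so it is killed by $\nabla_{\y_{t+\ell}}$. When $\ell = L$ the conditional depends exactly on the window. When $\ell<L$, the lags $\y_{t-1},\dots$ preceding the window must be integrated out; this yields a mixture, and the cleanest route is to work with the conditional density given the full history (see Step 4).

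With the history fixed, the conditional is Gaussian:
\[
-\tfrac12 (\y_{t+\ell}-f)^\top\Omega^{-1}(\y_{t+\ell}-f).
\]
Its mixed derivative in $(\y_{t+\ell},\y_t)$ equals $\Omega^{-1}\partial f/\partial \y_t = \Omega^{-1}\bm F_\ell$, because $\y_t$ enters $f$ at lag $\ell$. Taking the expectation and the minus sign from Step 1 gives $\bm M_\ell=-\Omega^{-1}\E[\bm F_\ell]$. The support and sign statements follow immediately, since a diagonal positive $\Omega^{-1}$ only rescales rows positively. For the support claim in the nondiagonal case I would read $\supp$ up to this invertible rescaling, or restrict to diagonal $\Omega$.

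\textbf{Step 4 (partial observation and omitted pre-window lags).} Here the marginal window density is a mixture over hidden and pre-window variables $\h$:
\[
\nabla_{\x_{t+\ell}}\nabla_{\x_t}^\top \log \tilde p = \E\bigl[\nabla\nabla^\top \log p(\cdot,\h)\mid \text{window}\bigr] + \Cov\bigl(\nabla_{\x_{t+\ell}}\log p, \nabla_{\x_t}\log p \mid \text{window}\bigr).
\]
This is the Louis / missing-information identity. The first term, by the Gaussian computation, yields $\Omega^{-1}$-type blocks times $\bm F_\ell$. The Schur complement of $\Omega$ on $O$ appears when the hidden noise is integrated out, and I would identify it with the paper's $\Omega_{OO}^{-1}$ after restricting to the $[O,O]$ block. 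The conditional-covariance term, together with the cross terms from $\bm F_\ell[O,H]$, defines $\bm B_\ell$. This bias vanishes when the conditional law of $\y[H]$ does not vary with $\y[O]$ given the past, because then the covariance term is zero and the hidden contributions factor out.

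The same argument shows the residual bias in full observation with $\ell<L$ is zero whenever pre-window lags are conditioned on. Otherwise I would state (a) with the window augmented by the $L-\ell$ earlier lags, or assume $\ell=L$. Pinning down this point, and making the bias-matrix form precise, is the main obstacle; Steps 1–3 are routine.
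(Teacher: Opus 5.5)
Your argument for (a) is the paper's: the score--Hessian identity, the factorization $\log p(\z_t)=\log p(\y_{t+\ell}\mid\y_{t:t+\ell-1})+\log p(\y_{t:t+\ell-1})$, and the mixed derivative $\bm{\Omega}^{-1}\bm{F}_\ell$ of the Gaussian kernel.

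\textbf{The case $\ell<L$.} You are right to worry here, and the paper does not resolve the issue. Its proof asserts $\y_{t+\ell}\mid\y_{t:t+\ell-1}\sim\mathcal{N}(f(\y_{t+\ell-1},\ldots,\y_{t+\ell-L}),\bm{\Omega})$. For $\ell<L$, however, that mean involves $\y_{t-1},\ldots,\y_{t+\ell-L}$, which lie outside the window. Your ``main obstacle'' cannot be overcome, because the identity is false there.

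Take the scalar VAR(2) $y_{t+1}=a_1y_t+a_2y_{t-1}+\eta_t$ with $\eta_t\sim\mathcal{N}(0,\omega)$, and set $\ell=1$. The window $(y_t,y_{t+1})$ is Gaussian with lag-one correlation $\rho_1=a_1/(1-a_2)$ and $\Var(y_{t+1}\mid y_t)=\omega/(1-a_2^2)$. Hence $M_1=(\bm{\Sigma}^{-1})_{21}=-\rho_1(1-a_2^2)/\omega=-a_1(1+a_2)/\omega$, which differs from $-a_1/\omega$ whenever $a_1a_2\neq 0$. This is exactly the paper's own omitted-lag example, since for $\ell=1$ the minimal window is the pair.

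So your Step~2 proves only the corrected statement: either $\ell=L$, or a window augmented by the $L-\ell$ pre-window lags, which is a different statistic. That restriction is necessary, not optional. The support equality likewise genuinely needs diagonal $\bm{\Omega}$, because a non-diagonal $\bm{\Omega}^{-1}$ mixes rows of $\E[\bm{F}_\ell]$. Reading it ``up to rescaling'' does not rescue it, so restrict to diagonal noise.

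\textbf{Part (b).} Your route differs slightly from the paper's. The paper uses Fisher's identity $\tilde{\s}_\tau=\E[\s_{\tau,O}(\z_t)\mid\tilde{\z}_t]$ and the law of total covariance. This gives $\tilde{\bm{M}}_\ell=(\bm{M}_\ell)_{OO}-\E[\Cov(\s_{t+\ell,O},\s_{t,O}\mid\tilde{\z}_t)]$ using only first derivatives of the marginal. You apply Louis' identity to the Hessian and then the score--Hessian identity; after taking expectations this yields the same decomposition.

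What your version buys is a uniform treatment of hidden neurons and pre-window lags as latents, which is exactly what exposes the $\ell<L$ defect. With the pre-window lags as latents, the conditional-covariance term is nonzero even for $H=\emptyset$; in the example above it equals $a_1a_2/\omega$.

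Two corrections to Step~4:
\begin{enumerate}
\item No Schur complement needs to be ``identified'' with $\bm{\Omega}_{OO}^{-1}$. The $OO$ block is $(\bm{\Omega}^{-1})_{OO}\E[\bm{F}_\ell[O,O]]+(\bm{\Omega}^{-1})_{OH}\E[\bm{F}_\ell[H,O]]$, where $(\bm{\Omega}^{-1})_{OO}=(\bm{\Omega}_{OO}-\bm{\Omega}_{OH}\bm{\Omega}_{HH}^{-1}\bm{\Omega}_{HO})^{-1}$. This equals $\bm{\Omega}_{OO}^{-1}$, and the cross term vanishes, only when $\bm{\Omega}_{OH}=\bm{0}$. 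That is the block-diagonality the paper explicitly imposes at this step, although the theorem statement omits it.
\item The explicit cross term involves $\bm{F}_\ell[H,O]$ (observed-to-hidden), not $\bm{F}_\ell[O,H]$. Hidden-to-observed influence enters only through the conditional covariance.
\end{enumerate}
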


\subsection{Proof of Theorem~\ref{thm:unified_new}}

We first recall the main identity justifying the use of score-block matrices.

\begin{lemma}[Score-Hessian Identity]
\label{lem:score_hess}
For any density $p(\z)$ with finite Fisher information, and any coordinates $a, b$:
$$
\E[s_a(\z)\, s_b(\z)] \;=\; -\E\!\left[\frac{\partial^2 \log p(\z)}{\partial z_a\, \partial z_b}\right].
$$
\end{lemma}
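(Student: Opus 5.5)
The identity is the multivariate analogue of the information-matrix equality, and I will prove it by integration by parts. Write $p$ for the density of $\z$ and $s_a = \partial_a \log p = \partial_a p / p$. The first step is to rewrite the mixed second derivative of $\log p$ using the quotient rule:
\[
\frac{\partial^2 \log p}{\partial z_a \partial z_b} \;=\; \frac{\partial_a\partial_b p}{p} - \frac{\partial_a p\,\partial_b p}{p^2} \;=\; \frac{\partial_a\partial_b p}{p} - s_a s_b .
\]
Taking expectations under $p$ gives
\[
\E[\partial_a\partial_b \log p] \;=\; \int \partial_a\partial_b p \,d\z \;-\; \E[s_a s_b].
\]
The lemma is therefore equivalent to showing that the second-derivative integral vanishes.

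For that second step, I integrate $\partial_a(\partial_b p)$ in the variable $z_a$ first, with the remaining coordinates held fixed (Fubini). The inner integral equals the difference of the boundary values of $\partial_b p$ as $z_a \to \pm\infty$. It vanishes provided $\partial_b p \to 0$ at infinity and $\partial_b p$ is integrable along almost every line.

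The third step is to justify these manipulations under the hypotheses, and I expect this to be the main obstacle. Finite Fisher information gives $s_a, s_b \in L^2(p)$, so $\E[s_a s_b]$ is finite by Cauchy--Schwarz. It also gives $\partial_b p = s_b\, p \in L^1$, since $\int |s_b| p \le (\int s_b^2 p)^{1/2}$. For the boundary term, I would avoid pointwise decay assumptions and use a smooth cutoff $\chi_R(\z) = \chi(\z/R)$ instead. We have $\int \chi_R\, \partial_a\partial_b p = -\int \partial_a\chi_R\, \partial_b p$. The right-hand side is bounded by $\|\nabla\chi\|_\infty R^{-1}\|\partial_b p\|_{L^1}$, which tends to $0$. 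For the left-hand side to converge to $\int \partial_a\partial_b p$ as $R\to\infty$, I need $\partial_a\partial_b p \in L^1$, equivalently $\E|\partial_a\partial_b\log p| < \infty$. This is implicit in the statement, since the right-hand expectation must be well defined; I would state it explicitly as the standing regularity condition, together with smoothness and positivity of $p$ (as in Assumption~\ref{ass:moments}).

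Finally, in the application $p$ is the joint window density of $\z_t$. Under Assumption~\ref{ass:dynamics_new} it is a product of the stationary density and Gaussian transition kernels, so it is smooth and positive and the lemma applies blockwise. Taking $a$ in block $t+\ell$ and $b$ in block $t$ yields \eqref{eq:score_hess_main_text} entrywise for $\bm{M}_\ell$.
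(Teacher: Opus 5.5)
Your proof is correct and is essentially the paper's argument: the paper integrates $\int \partial_a p\, s_b\, d\z$ by parts in $z_a$, and the boundary term it discards ``by regularity of $p$'' is exactly your $\int \partial_a\partial_b p\, d\z$, so the two proofs differ only in bookkeeping. Your cutoff step is a rigorous version of the step the paper leaves informal: it uses $\partial_b p = s_b\,p \in L^1$ (from finite Fisher information) together with the hypothesis $\E|\partial_a\partial_b \log p| < \infty$ that you add explicitly, and it correctly flags that condition, along with smoothness and positivity, as something the lemma's statement leaves implicit.
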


\begin{proof}
By definition, $s_a(\z) = \partial_{z_a} \log p(\z) = \frac{\partial_{z_a} p(\z)}{p(\z)}$. Expanding the expectation:
\begin{align}
\E[s_a \cdot s_b]
&= \int s_a(\z)\, s_b(\z)\, p(\z)\, d\z \nonumber\\
&= \int \frac{\partial_{z_a} p(\z)}{p(\z)} \cdot s_b(\z) \cdot p(\z)\, d\z \nonumber\\
&= \int \partial_{z_a} p(\z) \cdot s_b(\z)\, d\z.
\label{eq:expand}
\end{align}
Applying integration by parts in $z_a$, with boundary terms vanishing by regularity of $p$:
\begin{align}
\int \partial_{z_a} p(\z)\, s_b(\z)\, d\z
&= -\int p(\z)\, \partial_{z_a} s_b(\z)\, d\z
= -\E[\partial_{z_a} s_b(\z)].
\label{eq:ibp}
\end{align}
Since $\partial_{z_a} s_b(\z) = \partial_{z_a}\partial_{z_b} \log p(\z)$, combining \eqref{eq:expand}--\eqref{eq:ibp} yields:
$$
\E[s_a \cdot s_b]
= -\E\!\left[\frac{\partial^2 \log p(\z)}{\partial z_a\, \partial z_b}\right]. \qedhere
$$
\end{proof}

We now prove Theorem~\ref{thm:unified_new}. Throughout, we restrict to lags $\ell \in \{1,\dots,L\}$ so that the lag-$\ell$ Jacobian $\bm{F}_\ell$ corresponds to a direct dependence of $\y_{t+\ell}$ on $\y_t$ through a single transition.

\paragraph{Part (a): full observation.}
Assume all $N$ neurons are observed. Consider the window $\z_t := (\y_t,\y_{t+1},\dots,\y_{t+\ell})$ and the joint density $p(\z_t)$. By the Markov structure induced by \eqref{eq:dynamics_full}, the joint log-density decomposes as
\begin{equation}
\log p(\z_t)
=
\log p(\y_{t+\ell} \mid \y_{t:t+\ell-1})
+
\log p(\y_{t:t+\ell-1}),
\label{eq:joint_decomp}
\end{equation}
where $\y_{t:t+\ell-1}:=(\y_t,\dots,\y_{t+\ell-1})$. The second term in \eqref{eq:joint_decomp} does not depend on $\y_{t+\ell}$, hence any mixed derivative involving $\y_{t+\ell}$ reduces to the corresponding mixed derivative of the conditional term.

By Assumption~\ref{ass:dynamics_new}, the conditional distribution is Gaussian:
$$
\y_{t+\ell} \mid \y_{t:t+\ell-1}
\sim
\mathcal{N}\!\Bigl(f(\y_{t+\ell-1},\y_{t+\ell-2},\ldots,\y_{t+\ell-L}),\,\bm{\Omega}\Bigr).
$$
Define the residual
$$
\bm{r}
:=
\y_{t+\ell} - f(\y_{t+\ell-1},\ldots,\y_{t+\ell-L}).
$$
Then the conditional log-density is
\begin{equation}
\log p(\y_{t+\ell}\mid \y_{t:t+\ell-1})
=
-\frac{1}{2}\,\bm{r}^{\top}\bm{\Omega}^{-1}\bm{r} + C,
\label{eq:cond_log}
\end{equation}
where $C$ is a normalization constant. Differentiating \eqref{eq:cond_log} with respect to $\y_{t+\ell}$ yields the (future-block) score:
\begin{equation}
\nabla_{\y_{t+\ell}} \log p(\y_{t+\ell}\mid \y_{t:t+\ell-1})
=
-\bm{\Omega}^{-1}\bm{r}.
\label{eq:future_score}
\end{equation}
Now take the mixed derivative with respect to $\y_t$. Since $\bm{r}$ depends on $\y_t$ only through the mean function $f$, we have $\nabla_{\y_t}\bm{r} = -\nabla_{\y_t} f$. Therefore,
\begin{align}
\nabla_{\y_t}\nabla_{\y_{t+\ell}}^{\top}\log p(\y_{t+\ell}\mid \y_{t:t+\ell-1})
&=
\nabla_{\y_t}\Bigl(-\bm{\Omega}^{-1}\bm{r}\Bigr)^{\top}
\nonumber\\
&=
-\nabla_{\y_t}\bigl(\bm{r}^{\top}\bigr)\bm{\Omega}^{-1}
\nonumber\\
&=
-( -\nabla_{\y_t} f^{\top})\bm{\Omega}^{-1}
=
\nabla_{\y_t} f^{\top}\bm{\Omega}^{-1}.
\label{eq:mixed_hess_intermediate}
\end{align}
Equivalently, in entrywise form, for indices $j,i \in [N]$,
\begin{align}
\frac{\partial^2}{\partial y_{t+\ell,j}\,\partial y_{t,i}}
\log p(\y_{t+\ell}\mid \y_{t:t+\ell-1})
&=
\sum_{k=1}^{N}
\frac{\partial}{\partial y_{t,i}}
\Bigl[(-\bm{\Omega}^{-1}\bm{r})_j\Bigr]
\nonumber\\
&=
\sum_{k=1}^{N}
(-\bm{\Omega}^{-1})_{jk}\,\frac{\partial r_k}{\partial y_{t,i}}
=
\sum_{k=1}^{N}
(-\bm{\Omega}^{-1})_{jk}\,\Bigl(-\frac{\partial f_k}{\partial y_{t,i}}\Bigr)
\nonumber\\
&=
(\bm{\Omega}^{-1}\bm{F}_\ell)_{j i},
\label{eq:mixed_hess_entry}
\end{align}
where $\bm{F}_\ell := \frac{\partial f}{\partial \y_t}\in \R^{N\times N}$ is the lag-$\ell$ Jacobian evaluated at the appropriate lagged state stack (and thus depends on time through the state).

Combining \eqref{eq:joint_decomp} with \eqref{eq:mixed_hess_entry}, we obtain
$$
\frac{\partial^2}{\partial y_{t+\ell,j}\,\partial y_{t,i}}
\log p(\z_t)
=
\frac{\partial^2}{\partial y_{t+\ell,j}\,\partial y_{t,i}}
\log p(\y_{t+\ell}\mid \y_{t:t+\ell-1})
=
(\bm{\Omega}^{-1}\bm{F}_\ell)_{j i}.
$$
Now apply Lemma~\ref{lem:score_hess} with coordinates $a=(t+\ell,j)$ and $b=(t,i)$:
\begin{align}
(\bm{M}_\ell)_{j i}
&=
\E\!\left[s_{t+\ell,j}(\z_t)\, s_{t,i}(\z_t)\right]
=
-\E\!\left[
\frac{\partial^2}{\partial y_{t+\ell,j}\,\partial y_{t,i}}
\log p(\z_t)
\right]
\nonumber\\
&=
-\E\!\left[(\bm{\Omega}^{-1}\bm{F}_\ell)_{j i}\right]
=
-\bigl(\bm{\Omega}^{-1}\E[\bm{F}_\ell]\bigr)_{j i}.
\label{eq:full_id_correct}
\end{align}
Hence,
\begin{equation}
\bm{M}_\ell
=
-\bm{\Omega}^{-1}\E[\bm{F}_\ell].
\label{eq:full_id_matrix}
\end{equation}
This establishes part (a), with the support and sign relations following directly from \eqref{eq:full_id_matrix}. \hfill $\square$

\paragraph{Part (b): partial observation.}
Now suppose we observe only a subset $O\subset[N]$ with $|O|=n$, and write $H:=[N]\setminus O$ for the hidden indices. Let
$$
\tilde{\z}_t := (\x_t,\x_{t+1},\ldots,\x_{t+\ell})
\qquad\text{where}\qquad
\x_\tau := \y_\tau[O],
$$
and let $\h_\tau := \y_\tau[H]$ denote the hidden components. We use the decomposition
$$
\z_t = (\tilde{\z}_t, \h_{t:t+\ell}),
\qquad
\h_{t:t+\ell} := (\h_t,\h_{t+1},\ldots,\h_{t+\ell}).
$$
Define the marginal score components
$$
\tilde{\s}_\tau(\tilde{\z}_t) := \nabla_{\x_\tau} \log p(\tilde{\z}_t),
\qquad \tau\in\{t,t+\ell\},
$$
and the corresponding cross-block score product matrix
$$
\tilde{\bm{M}}_\ell := \E\!\left[\tilde{\s}_{t+\ell}(\tilde{\z}_t)\,\tilde{\s}_t(\tilde{\z}_t)^\top\right]\in\R^{n\times n}.
$$

We first relate the marginal score to the full-data score via differentiation under the integral. Since
$$
p(\tilde{\z}_t) = \int p(\tilde{\z}_t,\h_{t:t+\ell})\, d\h_{t:t+\ell},
$$
we have, for any observed coordinate $a$ (corresponding to some $(\tau,i)$ with $i\in O$),
\begin{align}
\partial_a \log p(\tilde{\z}_t)
&=
\frac{\partial_a p(\tilde{\z}_t)}{p(\tilde{\z}_t)}
=
\frac{1}{p(\tilde{\z}_t)}
\int \partial_a p(\tilde{\z}_t,\h_{t:t+\ell})\, d\h_{t:t+\ell}
\nonumber\\
&=
\frac{1}{p(\tilde{\z}_t)}
\int \partial_a \log p(\tilde{\z}_t,\h_{t:t+\ell})\; p(\tilde{\z}_t,\h_{t:t+\ell})\, d\h_{t:t+\ell}
\nonumber\\
&=
\E\!\left[\partial_a \log p(\z_t)\mid \tilde{\z}_t\right].
\label{eq:fisher_identity_scalar}
\end{align}
Stacking these identities over $i\in O$ yields the vector form:
\begin{equation}
\tilde{\s}_\tau(\tilde{\z}_t)
=
\E\!\left[\s_{\tau,O}(\z_t)\mid \tilde{\z}_t\right],
\qquad \tau\in\{t,t+\ell\},
\label{eq:fisher_identity_vector}
\end{equation}
where $\s_{\tau,O}(\z_t):=\nabla_{\y_\tau[O]}\log p(\z_t)\in\R^{n}$ is the full-data score restricted to the observed coordinates.

Let
$$
\bm{A}:=\s_{t+\ell,O}(\z_t)\in\R^{n},
\qquad
\bm{B}:=\s_{t,O}(\z_t)\in\R^{n},
\qquad
\bm{X}:=\tilde{\z}_t.
$$
Then \eqref{eq:fisher_identity_vector} implies
$$
\tilde{\s}_{t+\ell}(\bm{X}) = \E[\bm{A}\mid \bm{X}],
\qquad
\tilde{\s}_{t}(\bm{X}) = \E[\bm{B}\mid \bm{X}].
$$
Consequently,
\begin{equation}
\tilde{\bm{M}}_\ell
=
\E\!\left[\tilde{\s}_{t+\ell}(\bm{X})\,\tilde{\s}_{t}(\bm{X})^\top\right]
=
\E\!\left[\E[\bm{A}\mid \bm{X}]\,\E[\bm{B}\mid \bm{X}]^\top\right].
\label{eq:tildeM_def_expand}
\end{equation}
We now apply the matrix-valued law of total covariance:
\begin{equation}
\E[\bm{A}\bm{B}^\top]
=
\E\!\left[\E[\bm{A}\mid \bm{X}]\,\E[\bm{B}\mid \bm{X}]^\top\right]
+
\E\!\left[\Cov(\bm{A},\bm{B}\mid \bm{X})\right].
\label{eq:total_cov}
\end{equation}
Rearranging \eqref{eq:total_cov} and using \eqref{eq:tildeM_def_expand} yields
\begin{equation}
\tilde{\bm{M}}_\ell
=
\E[\bm{A}\bm{B}^\top]
-
\E\!\left[\Cov(\bm{A},\bm{B}\mid \tilde{\z}_t)\right].
\label{eq:tildeM_totalcov}
\end{equation}
The first term is the corresponding observed-observed block of the full cross-block score matrix:
$$
\E[\bm{A}\bm{B}^\top]
=
\E\!\left[\s_{t+\ell,O}(\z_t)\,\s_{t,O}(\z_t)^\top\right]
=
(\bm{M}_\ell)_{OO}.
$$
Therefore,
\begin{equation}
\tilde{\bm{M}}_\ell
=
(\bm{M}_\ell)_{OO}
+
\bm{B}_\ell,
\qquad
\bm{B}_\ell
:=
-\E\!\left[\Cov\!\left(\s_{t+\ell,O}(\z_t),\,\s_{t,O}(\z_t)\mid \tilde{\z}_t\right)\right].
\label{eq:partial_decomp}
\end{equation}
Finally, substituting the full-observation identity \eqref{eq:full_id_matrix} into \eqref{eq:partial_decomp} gives
\begin{equation}
\tilde{\bm{M}}_\ell
=
-\bigl(\bm{\Omega}^{-1}\E[\bm{F}_\ell]\bigr)_{OO}
+
\bm{B}_\ell.
\label{eq:partial_id_correct_general}
\end{equation}
In particular, if the noise covariance is block-diagonal across observed and hidden coordinates (i.e.\ $\bm{\Omega}_{OH}=\bm{\Omega}_{HO}=\bm{0}$), then $(\bm{\Omega}^{-1})_{OO}=\bm{\Omega}_{OO}^{-1}$ and \eqref{eq:partial_id_correct_general} simplifies to
\begin{equation}
\tilde{\bm{M}}_\ell
=
-\bm{\Omega}_{OO}^{-1}\E[\bm{F}_\ell[O,O]]
+
\bm{B}_\ell.
\label{eq:partial_id_correct_blockdiag}
\end{equation}
The bias term $\bm{B}_\ell$ in \eqref{eq:partial_decomp} captures the effect of marginalizing hidden neurons and depends on the conditional variability of the full-data scores given the observed window (and thus on the coupling between hidden and observed components, e.g.\ through $\Cov(\y[H],\y[O])$). Moreover, $\bm{B}_\ell=\bm{0}$ whenever
$$
\Cov\!\left(\s_{t+\ell,O}(\z_t),\,\s_{t,O}(\z_t)\mid \tilde{\z}_t\right)=\bm{0}
$$
which holds, for example, if the hidden components are conditionally independent of the observed components given the past. This establishes part (b). \hfill $\square$

\section{Methods and Implementation}
\label{app:implementation}

This section provides documentation of our analysis pipeline, including all preprocessing steps, model configurations, hyperparameter choices, and statistical procedures. Algorithm~\ref{alg:sbtg} summarizes the end-to-end SBTG procedure; the subsections below describe each step in detail.

\begin{algorithm}[h!]
\caption{Score--Block Time Graphs (SBTG)}
\label{alg:sbtg}
\begin{center}
\begin{minipage}{0.85\textwidth}
\begin{algorithmic}[1]
\STATE \textbf{Input:} Multi-animal recordings $\{X^{(u)}\}_{u=1}^M$; lag set $\mathcal{L}$; FDR level $\alpha$
\STATE \textbf{Output:} Lag-specific signed adjacencies $\{A^{(\ell)}\}_{\ell \in \mathcal{L}}$
\FOR{each lag $\ell \in \mathcal{L}$}
    \STATE Build minimal $(\ell{+}1)$-block windows; pool across worms
    \STATE Select hyperparameters via null contrast (Eq.~\ref{eq:null_contrast})
    \STATE 5-fold cross-fit: train score model, evaluate on held-out
    \FOR{each ordered neuron pair $(i, j)$}
        \STATE Compute product series $Y_t = \widehat\s_{t+\ell,j} \cdot \widehat\s_{t,i}$
        \STATE Estimate $\widehat\mu_{\ell,ji}$ and HAC standard error
    \ENDFOR
    \STATE Aggregate across worms via inverse-variance meta-analysis
    \STATE Apply BY FDR control at level $\alpha$; store $A^{(\ell)}_{ji} = \mathrm{sign}(\widehat\mu_{\ell,ji})$
\ENDFOR
\STATE \textbf{return} $\{A^{(\ell)}\}_{\ell \in \mathcal{L}}$
\end{algorithmic}
\end{minipage}
\end{center}
\end{algorithm}

\subsection{Data Preprocessing and Quality Control}
\label{app:preprocessing}

\subsubsection{Raw Data Sources}

Our analysis integrates three primary data sources. The NeuroPAL calcium imaging data~\citep{Yemini2021} consists of head and tail activity recordings sampled at 4 Hz (250 ms per frame) over 240-second recordings. Three chemical stimuli--- the attractive odors 2-butanone and 2,3-pentanedione, and an aversive quantity of NaCl (160mM) ---were used. The structural connectome from \citet{white1986structure, Cook2019} provides anatomical ground truth from electron microscopy reconstruction, including chemical synapses and gap junctions which we use as binary ground truth for edge prediction. Molecular expression data using NeuroPAL and neural identification reporters adds ground truth for the ionotropic GABAa connectome and metabotropic GABAb connectome \cite{Yemini2021}. The Bentley connectome \cite{bentley2016multilayer} provides ground truth from molecular expression reporters for monoaminergic (dopamine, octopamine, serotonin, tyramine) neuromodulatory edge prediction. The Randi et al.\ functional atlas~\citep{Randi2023} offers functional ground truth from optogenetic impulse response experiments, quantifying a second dimension of functional connectivity.

\subsubsection{Neuron Identification and Alignment}

\paragraph{Coverage Statistics.}
\begin{table}[h!]
\centering
\caption{Neuron coverage across recordings.}
\label{tab:coverage}
\begin{tabular}{lrr}
\toprule
\textbf{Category} & \textbf{Count} & \textbf{Notes} \\
\midrule
Total neurons (\emph{C. elegans}) & 302 & Hermaphrodite \\
Head neurons identified & $189$ & NeuroPAL \\
Tail neurons identified & $42$ & NeuroPAL \\
Neurons in $\geq 15$ worms & 80 & Our analysis set \\
Median appearances per neuron & 18 & Range: 15--20 \\
Worms with $\geq 70$ neurons & 21 & Post-imputation \\
\bottomrule
\end{tabular}
\end{table}

We require neurons to appear in $\geq 15$ worms, balancing coverage breadth with per-neuron sample size.

\paragraph{D/V Subtype Collapsing.}
NeuroPAL identifies dorsal/ventral subtypes (e.g., RMDD, RMDV) not distinguished in Cook's cell-class connectome. We collapse these to enable connectome alignment, recovering approximately 20 neurons that would otherwise be excluded. Table~\ref{tab:dv_collapse} lists all collapsed pairs.

\begin{table}[h!]
\centering
\caption{Dorsal/ventral subtype collapsing.}
\label{tab:dv_collapse}
\begin{small}
\begin{tabular}{ll}
\toprule
\textbf{Subtypes} & \textbf{Parent Class} \\
\midrule
RMDD, RMDV & RMD \\
SMDD, SMDV & SMD \\
RMEV, RMED & RME \\
RMDL, RMDR & RMD (L/R) \\
\bottomrule
\end{tabular}
\end{small}
\end{table}

\subsection{Hyperparameter Optimization}
\label{app:hp_optimization}

\subsubsection{Null Contrast Objective}

\paragraph{Motivation.}
Denoising score matching (DSM) loss measures how well $\widehat{\bm{s}}_\theta$ approximates the true score, but this does not guarantee edge recovery. In validation experiments, DSM validation loss showed \emph{negative} correlation with biological AUROC ($r = -0.15$ for Cook). This occurs because lower DSM loss can be achieved by fitting noise---learning high-frequency structure that does not correspond to real circuit connectivity.

\paragraph{Definition.}
The null contrast objective compares real signal strength to a null distribution:
\begin{equation}
\mathrm{NC} = \frac{\mathrm{mean}(|\widehat{\mu}_{\ell,ji}|, \, j \neq i)}{\mathrm{mean}(|\widehat{\mu}^{\mathrm{null}}_{\ell,ji}|, \, j \neq i)},
\label{eq:nc_detailed}
\end{equation}
where $\widehat{\mu}_{\ell,ji} = \frac{1}{T_{\mathrm{win}}} \sum_{t=1}^{T_{\mathrm{win}}} \widehat{s}_{t+\ell,j}(z_t) \cdot \widehat{s}_{t,i}(z_t)$ is the cross-block score product,
 $\widehat{\mu}^{\mathrm{null}}$ is computed after randomly permuting the lag-$\ell$ block scores across windows, and higher NC indicates stronger structured signal relative to permutation baseline. Permutation preserves marginal statistics (each block's score distribution) while breaking the joint structure. NC $> 1$ indicates the real data exhibits stronger cross-lag coupling than would occur by chance.

\paragraph{Implementation via Optuna.}
We use Optuna's TPE (Tree-structured Parzen Estimator) sampler for hyperparameter search \citep{akiba2019optuna}:

\begin{algorithm}[h!]
\caption{Hyperparameter Optimization via Null Contrast}
\label{alg:hp_optuna}
\begin{center}
\begin{minipage}{0.85\textwidth}
\begin{algorithmic}[1]
\scriptsize
\STATE \textbf{Input:} Training windows $\{z_t\}_{t=1}^N$, number of trials $K$
\STATE \textbf{Output:} Optimal hyperparameters $\theta^*$
\STATE Initialize Optuna study with TPE sampler
\FOR{trial $k = 1, \ldots, K$}
    \STATE Sample hyperparameters: $\sigma_{\mathrm{noise}}$, hidden\_dim, num\_layers, lr
    \STATE Split data into train/validation (80/20)
    \STATE Train score model on train set (30 epochs for speed)
    \STATE Compute scores on validation set
    \STATE Compute $\widehat{\mu}$ from validation scores
    \STATE Permute lag-$\ell$ scores, compute $\widehat{\mu}^{\mathrm{null}}$
    \STATE Compute NC = mean($|\widehat{\mu}|$) / mean($|\widehat{\mu}^{\mathrm{null}}|$)
    \STATE Report NC to Optuna
\ENDFOR
\STATE \textbf{return} Hyperparameters with highest NC
\end{algorithmic}
\end{minipage}
\end{center}
\end{algorithm}

\begin{table}[h!]
\centering
\caption{Hyperparameter search space for Optuna optimization.}
\label{tab:hp_search_space}
\begin{tabular}{lll}
\toprule
\textbf{Parameter} & \textbf{Range} & \textbf{Distribution} \\
\midrule
$\sigma_{\mathrm{noise}}$ (DSM noise) & [0.01, 0.30] & Continuous \\
hidden\_dim & \{32, 64, 128\} & Categorical \\
num\_layers & \{2, 3\} & Categorical \\
learning rate & $[10^{-4}, 10^{-2}]$ & Log-uniform \\
\midrule
\multicolumn{3}{l}{\textit{Fixed across trials:}} \\
batch\_size & 128 & - \\
train\_frac & 0.8 & - \\
epochs (tuning) & 30 & Reduced for speed \\
epochs (final) & 100 & Full training \\
\bottomrule
\end{tabular}
\end{table}

\paragraph{Number of Trials.}
We use $K = 20$ trials for synthetic benchmarks and $K = 50$ for empirical analysis. Validation experiments showed convergence by 20 trials, with diminishing returns beyond 50.


\subsection{Statistical Inference}
\label{app:inference}

\subsubsection{Cross-Fitting Protocol}

To avoid overfitting bias, all edge statistics are computed on held-out data:

\begin{algorithm}[h!]
\caption{5-Fold Cross-Fitting}
\label{alg:crossfit}
\begin{center}
\begin{minipage}{0.85\textwidth}
\begin{algorithmic}[1]
\scriptsize
\STATE \textbf{Input:} Windows $\{z_t^{(u)}\}$ with stimulus IDs $\{u\}$
\STATE Assign each unique stimulus $u$ to one of 5 folds
\FOR{fold $f = 1, \ldots, 5$}
    \STATE $\mathcal{D}_{\mathrm{train}} \gets$ windows from stimuli not in fold $f$
    \STATE $\mathcal{D}_{\mathrm{test}} \gets$ windows from stimuli in fold $f$
    \STATE Train score model $\widehat{\bm{s}}_\theta$ on $\mathcal{D}_{\mathrm{train}}$
    \STATE Compute scores on $\mathcal{D}_{\mathrm{test}}$: $\{\widehat{\bm{s}}_t\}_{t \in \mathcal{D}_{\mathrm{test}}}$
\ENDFOR
\STATE Concatenate all held-out scores
\STATE Compute edge statistics from held-out scores only
\end{algorithmic}
\end{minipage}
\end{center}
\end{algorithm}

\paragraph{Remark.} Folds are stratified by stimulus sequence (worm $\times$ trial), not by individual windows. This prevents data leakage from temporally adjacent windows and ensures independence between train and test sets.

\subsubsection{HAC Standard Errors}

\paragraph{Problem.}
The score product series $Y_t = \widehat{s}_{t+\ell,j} \cdot \widehat{s}_{t,i}$ exhibits strong autocorrelation due to overlapping windows. For lag-1 with 2-block windows, adjacent observations share the entire first block. Standard $t$-tests assuming i.i.d. observations would dramatically overstate significance.

\paragraph{Solution: Newey--West HAC Estimator.}
We use the Newey--West heteroskedasticity and autocorrelation consistent (HAC) variance estimator \citep{NeweyWest1987}:
\begin{equation}
\widehat{\mathrm{Var}}_{\mathrm{HAC}}(\bar{Y}) = \frac{1}{N}\left(\widehat{\gamma}_0 + 2\sum_{h=1}^m \left(1 - \frac{h}{m+1}\right) \widehat{\gamma}_h\right),
\label{eq:hac_var}
\end{equation}
where $\widehat{\gamma}_h = \frac{1}{N} \sum_{t=1}^{N-h} (Y_t - \bar{Y})(Y_{t+h} - \bar{Y})$ is the lag-$h$ autocovariance, $m$ is the bandwidth parameter (number of lags to include), and the kernel $\left(1 - \frac{h}{m+1}\right)$ ensures positive semi-definiteness.

\paragraph{Bandwidth Selection.}
We use $m = 7$ based on typical autocorrelation decay in our 4 Hz neural data, the rule-of-thumb $m \approx N^{1/4}$, and sensitivity analysis showing stable results for $m \in [5, 10]$.

The $t$-statistic is then:
\begin{equation}
t_{ji} = \frac{\sqrt{N} \, \widehat{\mu}_{\ell,ji}}{\sqrt{\widehat{\mathrm{Var}}_{\mathrm{HAC}}(Y_{ji})}},
\end{equation}
which we convert to two-sided $p$-values via the standard normal approximation.

\subsubsection{FDR Control}

\paragraph{Multiple Testing Burden.}
With $n = 80$ neurons, we test $n(n-1) = 6320$ directed edges per lag. At $\alpha = 0.05$ with independent tests, we would expect $\sim 316$ false discoveries. At lag-5, we test $5 \times 6320 = 31600$ edges.

\paragraph{BY Procedure.}
We use the BY procedure \citep{BenjaminiYekutieli2001}, which controls FDR under \emph{arbitrary dependence} among $p$-values:
\begin{equation}
\text{Reject } H_{(i)} \text{ if } p_{(i)} \leq \frac{i}{m \cdot c(m)} \alpha,
\end{equation}
where $p_{(1)} \leq \cdots \leq p_{(m)}$ are ordered $p$-values, $m = 6320$, and $c(m) = \sum_{j=1}^m \frac{1}{j} \approx \log(m) + \gamma$.

\paragraph{Choice of $\alpha$.}
We use $\alpha = 0.10$ as our FDR level, slightly more liberal than the conventional 0.05, reflecting the exploratory analysis context (hypothesis generation), validation against independent connectomes (false positives will not align), and typical neuroscience practice for functional connectivity. Sensitivity analysis in \cref{app:sensitivity-analysis} examines $\alpha \in \{0.01, 0.05, 0.10, 0.20\}$ for BH and BY FDR control methods.

\begin{figure}[ht]
    \centering
    \includegraphics[width=0.6\linewidth]{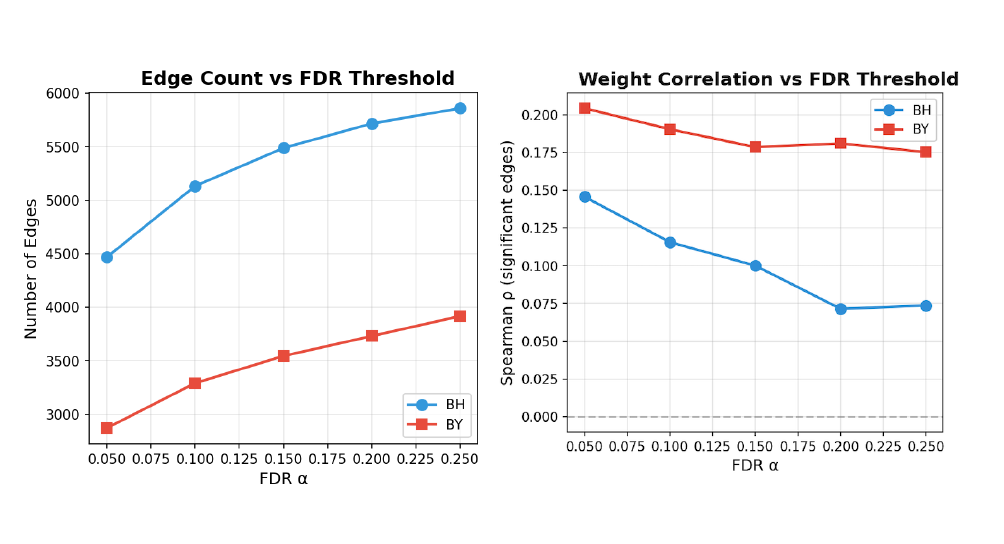}
    \caption{Left: number of discovered directed edges as a function of the nominal FDR level $\alpha$ under Benjamini--Hochberg (BH) and Benjamini--Yekutieli (BY). Right: Spearman rank correlation with an external structural benchmark, computed over the edges passing the corresponding threshold, versus $\alpha$. BY is more conservative, reflecting its validity under arbitrary dependence among test statistics.}
    \label{app:sensitivity-analysis}
\end{figure}

\subsection{Remarks}

\paragraph{Population object vs.\ practical estimator.}
\label{app:pop_vs_est}
Theorem~\ref{thm:unified_new} characterizes the \emph{population object} recovered by the exact joint-window score: under additive-noise dynamics, the identity $\bm{M}_\ell = -\bm{\Omega}^{-1}\E[\bm{F}_\ell]$ holds in expectation regardless of how the score is parameterized. The structured score model in Eq.~\eqref{eq:score_model} is a practical \emph{estimator} chosen to make this population object recoverable from short, temporally dependent recordings; it is not a claim that this specific functional form is the unique or universal score family. The structural choices in Eq.~\eqref{eq:score_model} (cross-block coupling matrices $W_r$ separated from within-block MLPs $g_k$) regularize the estimator toward the cross-block geometry that the theorem identifies, but they do not expand what the theorem proves: with an unrestricted (and consistent) score estimator the same identity would still hold in the population. The role of the structured form is to make the same target estimable in a data-limited regime.

\paragraph{Minimal multi-block windows vs.\ time-unrolled inputs.}
\label{app:multiblock_vs_unroll}
Naive time-unrolling --- concatenating lagged inputs $(\x_{t-1}, \x_{t-2}, \ldots)$ as features for a single model --- is not equivalent to SBTG's minimal multi-block window. With $L \ge 2$, pair-window estimates of the form $\E[\s(\x_{t+\ell})\,\s(\x_t)^\top]$ conflate the direct lag-$\ell$ effect with indirect pathways through intermediate time points $\x_{t+1},\dots,\x_{t+\ell-1}$ --- a temporal omitted-variable bias. SBTG's minimal multi-block window $\z^{(\ell)}_t = (\x_t, \x_{t+1}, \ldots, \x_{t+\ell})$ explicitly conditions on those intermediate states, so the cross-block score product $-\E[\s_{t+\ell}\s_t^\top]$ recovers the direct lag-$\ell$ Jacobian by construction. Time-unrolling alone does not provide this conditioning, and methods that simply concatenate lagged copies of the data inherit the omitted-lag bias even when they nominally accept long histories.

\paragraph{Sensitivity to additive-noise and stationarity assumptions.}
\label{app:noise_assumptions}
The exact identification result of Theorem~\ref{thm:jacobian_main} requires additive-noise dynamics, $\bm{x}_{t+1} = f(\bm{x}_t, \ldots, \bm{x}_{t-L+1}) + \bm{\varepsilon}_t$, with the stationarity / moment conditions of Assumption~\ref{ass:moments}. Real neural data are unlikely to satisfy these exactly. Our intended claim, however, is that SBTG targets a lag-specific coupling object directly from the joint-window distribution without requiring a parametric transition model. Outside the idealized theorem setting we view SBTG as estimating a reduced-form but still interpretable effective-coupling object --- the closest population analogue of the lag-$\ell$ Jacobian under whatever distribution actually generates the data. In high-dimensional, partially observed, non-stationary regimes characteristic of neuroscience recordings, this is a strength rather than a weakness: SBTG retains a probability-distribution-based estimand that does not depend on architectural and training choices the way RNN- or attention-based connectivity proxies (e.g.,~LINT, NetFormer) do.

\subsection{Baseline Methods Implementation}
\label{app:baselines}

We compare SBTG against seven baseline methods spanning linear/nonlinear and optimization-based approaches.

\begin{table}[h!]
\centering
\caption{Baseline method specifications.}
\label{tab:baseline_specs}
\begin{small}
\begin{tabular}{@{}l p{0.72\textwidth}@{}}
\toprule
\textbf{Method} & \textbf{Implementation Details} \\
\midrule
\textbf{Pearson} & Lag-$\ell$ cross-correlation: $\mathrm{corr}(x_{t+\ell,j}, x_{t,i})$. Pooled across worms. No regularization. \\
\midrule
\textbf{Partial Correlation} & Via precision matrix $\Omeg = \Sig^{-1}$. $\Omeg_{ji}$ estimates $\partial x_j / \partial x_i$ conditional on all others. Uses GraphicalLassoCV for regularization. \\
\midrule
\textbf{Graphical LASSO} & $\widehat{\Omeg} = \arg\min_{\Omeg} \left\{-\log\det(\Omeg) + \mathrm{tr}(\Omeg\widehat{\Sig}) + \lambda\|\Omeg\|_1\right\}$. Cross-validated $\lambda$ via scikit-learn. \\
\midrule
\textbf{Granger Causality} & Per-pair $F$-test: Does $x_{t-1:t-\ell,i}$ improve prediction of $x_{t,j}$ beyond own lags? Fit per-worm, average $F$-statistics. Implementation: \texttt{statsmodels.tsa.stattools.grangercausalitytests}. \\
\midrule
\textbf{VAR-LASSO} & Fit VAR($\ell$) via $\ell_1$-penalized regression. Per-worm fitting, average coefficients. Regularization: $\alpha = 0.1$ \\
\midrule
\textbf{VAR-Ridge} & $\ell_2$-penalized VAR. $\alpha = 1.0$. More stable than LASSO for correlated predictors. \\
\midrule
\textbf{VAR-LiNGAM} & Causal discovery via independent components \citep{Hyvarinen2013lingam}. Assumes linear non-Gaussian additive noise. Implementation: \texttt{lingam} package. \\
\bottomrule
\end{tabular}
\end{small}
\end{table}

\paragraph{Remark.}
For VAR and Granger baselines, we fit models \emph{per-worm} and average the resulting parameters/statistics. This avoids spurious temporal dependencies at worm boundaries that would arise from concatenation.

\subsubsection{Deep-Learning Baselines}
\label{app:dl_baselines}

Beyond the classical and causal-discovery baselines above, we compare against three recent deep-learning architectures for connectivity inference. All three are trained on identical sliding-window calcium-imaging data and evaluated against the Cook structural connectome at lag~1.

\begin{itemize}[leftmargin=1.5em, itemsep=2pt]
  \item \textbf{NRI}~\citep{Kipf2018NRI}: a variational autoencoder for interacting systems.
  \item \textbf{NetFormer}~\citep{Lu2025NetFormer}: a transformer-inspired next-step predictor for neural dynamics. 
  \item \textbf{LINT}~\citep{Valente2022LINT}: a method that fits low-rank rate RNNs to neural trajectories.
\end{itemize}

\begin{table}[h!]
\centering
\caption{\textbf{Deep-learning baselines on the Cook structural connectome (lag~1, $n{=}80$).} SBTG outperforms all three deep-learning baselines on every metric. NRI, NetFormer, and LINT remain near chance on AUROC, indicating that their inductive biases (latent-variable VAE, attention-based prediction, low-rank dynamical reconstruction) do not naturally produce lag-specific directed edge tests in this regime; SBTG's cross-block score geometry, by contrast, is designed to recover them by construction.}
\label{tab:dl_baselines_cook}
\begin{small}
\begin{tabular}{lcccc}
\toprule
Method & AUROC & AUPRC & F1 & Correlation \\
\midrule
\textbf{SBTG (lag 1)} & \textbf{0.581} & \textbf{0.289} & \textbf{0.284} & \textbf{0.202} \\
NRI~\citep{Kipf2018NRI} & 0.507 & 0.208 & 0.173 & 0.033 \\
NetFormer~\citep{Lu2025NetFormer} & 0.505 & 0.205 & 0.176 & 0.008 \\
LINT~\citep{Valente2022LINT} & 0.503 & 0.204 & 0.163 & 0.006 \\
\bottomrule
\end{tabular}
\end{small}
\end{table}

\subsubsection{Synthetic vs.\ Empirical Baseline Choice and Metric Rationale}
\label{app:baseline_choice}

The synthetic and empirical experiments use deliberately different baseline sets, and the empirical evaluation reports both AUROC and Spearman rank correlation. We explain both choices here.

\paragraph{Why the baseline sets differ.}
The synthetic suite (Tables~\ref{tab:synthetic}, \ref{tab:res_var}, \ref{tab:res_poisson}, \ref{tab:res_hawkes}) tests \emph{exact recovery} against fully specified data-generating processes, so we benchmark against methods that target structural recovery under known dynamical assumptions: causal-discovery (DYNOTEARS, PCMCI$^+$, VAR-LiNGAM) and regularized regression (VAR-LASSO, VAR-Ridge). The empirical \emph{C.~elegans} setting (Table~\ref{tab:lag1_baselines}) evaluates \emph{rank alignment with partial external references} (Cook structural connectome, Randi et al.\ functional atlas), where ground truth is incomplete and noisy; here we benchmark against well-understood correlation/regression baselines (Pearson, cross-correlation, Granger, Glasso) plus the deep-learning architectures of Section~\ref{app:dl_baselines} (NRI, NetFormer, LINT). Both baseline sets are domain-appropriate, but they answer different questions. The synthetic comparators are designed for sharp ground truth; the empirical comparators are designed for the partial-observation, partial-reference regime of population recordings.

\paragraph{Why Spearman correlation alongside AUROC.}
\label{app:spearman_rationale}
On the empirical Cook benchmark we report both AUROC and Spearman rank correlation $\rho$. The two metrics answer different questions in this regime. AUROC treats every unannotated edge as equally negative, including pairs that may be biologically plausible but absent from the structural reference (the Cook connectome is itself partial, and absence of an edge does not imply absence of a functional interaction). Spearman correlation, in contrast, asks whether the inferred edges that SBTG ranks more strongly align with the biologically supported relationships in the reference, irrespective of where any binary cutoff falls. Our intended real-data claim is therefore not that SBTG solves binary connectome recovery from activity alone, but that it recovers a more accurate \emph{relative ordering} of directed interactions and their multi-timescale organization. We retain AUROC for comparison with prior literature, while emphasizing $\rho$ as the more interpretable measure under partial-reference ground truth.

\paragraph{Agreement across methods.}
On the 80-neuron recordings at lag 1, Spearman correlations between edge rankings were $0.17$ for SBTG versus Pearson, $0.10$ for SBTG versus Granger, and $0.24$ for Pearson versus Granger. Among their respective top $100$ edges, SBTG and Pearson shared $46$ edges, while SBTG and Granger shared $8$. These results show that similar benchmark performance accompanied different individual learned-edge rankings.

\subsection{Synthetic Benchmark Suite}
\label{app:synthetic}

To validate SBTG's ability to recover known structure, we implemented a comprehensive synthetic benchmark suite.

\subsubsection{Data Generating Processes}

We test four synthetic families, each with lag-1 and lag-2 ground truth:

\begin{table}[h!]
\centering
\caption{Synthetic data families.}
\label{tab:synthetic_families}
\begin{small}
\begin{tabular}{@{}l p{0.78\textwidth}@{}}
\toprule
\textbf{Family} & \textbf{Dynamics} \\
\midrule
\textbf{VAR(2)} & $x_{t+1} = A_1 x_t + A_2 x_{t-1} + \varepsilon_t$, $\varepsilon_t \sim \mathcal{N}(0, \sigma^2 I)$. Ground truth: $\supp(A_1)$, $\supp(A_2)$. \\
\midrule
\textbf{Poisson GLM} & $\lambda_t = \exp(\alpha + A_1 x_t + A_2 x_{t-1} + s_t)$, $x_{t+1} \sim \mathrm{Poisson}(\lambda_t)$. Count data with stimulus drive $s_t$. \\
\midrule
\textbf{Hawkes-like} & $\lambda_t = \mathrm{softplus}(\alpha + A_1 x_t + A_2 x_{t-1})$, $x_{t+1} \sim \mathrm{Poisson}(\lambda_t)$. Self-exciting point process. \\
\midrule
\textbf{Tanh VAR(2)} & $x_{t+1} = \tanh(W_1 x_t + W_2 x_{t-1}) + \varepsilon_t$. Saturating nonlinearity like neural firing rates. \\
\bottomrule
\end{tabular}
\end{small}
\end{table}

\paragraph{Sparsity and Scaling.}
All adjacency matrices use 10\% sparsity (6--7 edges per neuron on average for $n=10$), lag-1 scale of 0.5--0.8 (stronger effects), lag-2 scale of 0.25--0.4 (weaker delayed effects), and spectral scaling for VAR families to ensure stability.

\subsubsection{Experimental Design}

\begin{table}[h!]
\centering
\caption{Synthetic benchmark configuration.}
\label{tab:synthetic_config}
\begin{tabular}{lll}
\toprule
\textbf{Factor} & \textbf{Levels} & \textbf{Values} \\
\midrule
Data family & 4 & VAR, Poisson, Hawkes, Tanh \\
Noise level & 2 & Low ($\sigma=0.1$), High ($\sigma=0.5$) \\
Sequence length & 2 & Short (300), Long (800) \\
Random seed & 2 & 0, 1 \\
\midrule
\multicolumn{3}{l}{\textbf{Total combinations: $4 \times 2 \times 2 \times 2 = 32$ datasets}} \\
\bottomrule
\end{tabular}
\end{table}

\paragraph{Network Size.}
We use $n = 10$ neurons and $m = 3$ independent stimulus sequences per dataset. These dimensions are smaller than biological scale ($n=80$) to enable fast iteration and comprehensive method comparison.

\paragraph{Methods Evaluated}

We run SBTG with 20 Optuna trials each using null contrast objective, testing two statistical configurations (HAC bandwidth $m \in \{5, 7\}$) with BY FDR at $\alpha = 0.10$, yielding 6 fits per dataset. We compare against seven baseline methods: VAR-LASSO ($\alpha = 0.1$), VAR-Ridge ($\alpha = 1.0$), VAR-LiNGAM~\citep{Hyvarinen2013lingam}, Poisson-GLM (for count families), PCMCI+~\citep{Runge2020PCMCIplus} and DYNOTEARS~\citep{pamfil2020dynotearsstructurelearningtimeseries}. All baselines use per-worm fitting with coefficient averaging (VAR methods) or per-worm F-test averaging (Granger).

\paragraph{Evaluation Metrics}

For each method and dataset, we compute binary classification metrics for lag-1 (AUROC, AUPRC, F1, Precision, Recall), multi-lag AUROC (separate values for lag-1 and lag-2 predictions), and Spearman correlation (rank correlation between predicted weights and ground truth).

\subsection{Full Synthetic Results}
\label{app:synthetic_full}

We present the full benchmark results for Linear VAR, Poisson, and Hawkes datasets below. The Nonlinear Tanh results are presented in the main text (\cref{tab:synthetic}).

\begin{table}[h!]
\centering
\caption{\textbf{Linear VAR Results.} SBTG outperforms linear/nonlinear baselines.}
\label{tab:res_var}
\begin{small}
\begin{tabular}{lccc}
\toprule
Method & F1 Score & AUROC & AUPRC \\
\midrule
\textbf{SBTG} & \textbf{0.39 $\pm$ 0.20} & \textbf{0.72 $\pm$ 0.16} & \textbf{0.25 $\pm$ 0.15} \\
DYNOTEARS & 0.14 $\pm$ 0.16 & 0.56 $\pm$ 0.06 & 0.19 $\pm$ 0.11 \\
VAR-LASSO & 0.21 $\pm$ 0.04 & 0.59 $\pm$ 0.04 & 0.17 $\pm$ 0.03 \\
PCMCI+ & 0.13 $\pm$ 0.09 & 0.52 $\pm$ 0.09 & 0.10 $\pm$ 0.00 \\
VAR-Ridge & 0.17 $\pm$ 0.03 & 0.46 $\pm$ 0.14 & 0.13 $\pm$ 0.06 \\
VAR-LiNGAM & 0.12 $\pm$ 0.10 & 0.52 $\pm$ 0.05 & 0.10 $\pm$ 0.03 \\
\bottomrule
\end{tabular}
\end{small}
\end{table}

\begin{table}[h!]
\centering
\caption{\textbf{Poisson GLM Results.} SBTG handles count data effectively.}
\label{tab:res_poisson}
\begin{small}
\begin{tabular}{lccc}
\toprule
Method & F1 Score & AUROC & AUPRC \\
\midrule
\textbf{SBTG} & \textbf{0.25 $\pm$ 0.12} & \textbf{0.66 $\pm$ 0.16} & {0.15 $\pm$ 0.06} \\
DYNOTEARS & 0.03 $\pm$ 0.05 & 0.48 $\pm$ 0.04 & 0.09 $\pm$ 0.01 \\
VAR-LASSO & 0.23 $\pm$ 0.12 & 0.58 $\pm$ 0.08 & \textbf{0.16 $\pm$ 0.06} \\
PCMCI+ & 0.19 $\pm$ 0.07 & 0.56 $\pm$ 0.07 & 0.11 $\pm$ 0.02 \\
VAR-Ridge & 0.16 $\pm$ 0.02 & 0.53 $\pm$ 0.09 & 0.15 $\pm$ 0.08 \\
VAR-LiNGAM & 0.09 $\pm$ 0.14 & 0.51 $\pm$ 0.05 & 0.12 $\pm$ 0.06 \\
\bottomrule
\end{tabular}
\end{small}
\end{table}

\begin{table}[h!]
\centering
\caption{\textbf{Hawkes Process Results.} Challenging point-process dynamics.}
\label{tab:res_hawkes}
\begin{small}
\begin{tabular}{lccc}
\toprule
Method & F1 Score & AUROC & AUPRC \\
\midrule
\textbf{SBTG} & \textbf{0.20 $\pm$ 0.09} & \textbf{0.58 $\pm$ 0.15} & {0.12 $\pm$ 0.05} \\
DYNOTEARS & 0.03 $\pm$ 0.08 & 0.50 $\pm$ 0.03 & 0.10 $\pm$ 0.02 \\
VAR-LASSO & 0.17 $\pm$ 0.09 & 0.55 $\pm$ 0.06 & \textbf{0.14 $\pm$ 0.04} \\
PCMCI+ & 0.14 $\pm$ 0.05 & 0.51 $\pm$ 0.05 & 0.09 $\pm$ 0.01 \\
VAR-Ridge & 0.16 $\pm$ 0.02 & 0.51 $\pm$ 0.07 & \textbf{0.14 $\pm$ 0.04} \\
VAR-LiNGAM & 0.03 $\pm$ 0.08 & 0.50 $\pm$ 0.03 & 0.10 $\pm$ 0.02 \\
\bottomrule
\end{tabular}
\end{small}
\end{table}

\begin{table}[h!]
\centering
\caption{\textbf{Nonlinear Tanh at empirical scale ($n{=}80$).} Extension of Table~\ref{tab:synthetic} from the $n{=}10$ to $n{=}80$. We evaluate the four $(\text{noise}, \text{length})$ variants used elsewhere in the synthetic suite: noise $\in \{\text{low}, \text{high}\}$ and length $\in \{\text{short } (T{=}300), \text{long } (T{=}3000)\}$. SBTG remains above chance on the two high-noise variants, achieving per-cell AUROC of $0.68$ at high/short and $0.82$ at high/long; the gain with $T_{\text{long}}{=}3000$ vs $T_{\text{short}}{=}300$ is consistent with our windows-per-edge analysis.  The aggregated rows below report mean $\pm$ std across all four variants, including the failed low-noise cells. Classical and causal-discovery baselines remain at chance throughout.}
\label{tab:res_tanh_n80}
\begin{small}
\begin{tabular}{lccc}
\toprule
Method & F1 Score & AUROC & AUPRC \\
\midrule
\textbf{SBTG} & \textbf{0.35 $\pm$ 0.18} & \textbf{0.62 $\pm$ 0.16} & \textbf{0.23 $\pm$ 0.15} \\
VAR-LASSO & 0.15 $\pm$ 0.02 & 0.50 $\pm$ 0.01 & 0.11 $\pm$ 0.00 \\
VAR-Ridge & 0.19 $\pm$ 0.00 & 0.49 $\pm$ 0.00 & 0.10 $\pm$ 0.00 \\
VAR-LiNGAM & 0.10 $\pm$ 0.05 & 0.49 $\pm$ 0.01 & 0.10 $\pm$ 0.00 \\
\bottomrule
\end{tabular}
\end{small}
\end{table}

\paragraph{Linear and count-process scaling at $n{=}80$.}
\label{app:n80_scaling}
Beyond the nonlinear tanh family above, we ran preliminary larger-scale benchmarks on the Linear VAR and Poisson GLM families at $n{=}80$ as part of the reviewer-response process. SBTG retains a non-trivial recovery signal in both: mean F1 of $0.160$ on the VAR family and $0.141$ on the Poisson family, averaged across the same noise/length grid as the tanh table above. By contrast, several alternative methods collapse to chance at this scale: VAR-LiNGAM achieves mean F1 $0.044$ (VAR) and $0.048$ (Poisson), and the deep-learning baselines NRI, NetFormer, and LINT remain near chance throughout. We do not claim that $n{=}10$ recovery extends without modification to the empirical regime; rather, SBTG remains feasible and continues to recover a modest but interpretable signal in a substantially more complex setting where competing methods do not.

\subsection{Cell Type Analysis}
\label{app:celltype}

\subsubsection{Cell Type Assignment}

Neurons are classified into functional categories using Cook's annotations:

\begin{table}[h!]
\centering
\caption{Cell type categories.}
\label{tab:celltypes}
\begin{tabular}{lrl}
\toprule
\textbf{Type} & \textbf{Count} & \textbf{Examples} \\
\midrule
Sensory (S) & 28 & ASE, AWA, AWC, ASH, ASK \\
Interneuron (I) & 34 & AIB, RIM, AVA, AVE, RIA \\
Motor (M) & 11 & RMD, SMD, RME, VA, VB \\
Other & 7 & GLR, DVA, PVC \\
\bottomrule
\end{tabular}
\end{table}

\subsubsection{Statistical Comparisons}

For each lag $r$ and cell-type pair $(A, B) \in \{S, I, M\}^2$, we extract significant edges $A \to B$ where $\{(j,i) : i \in A, j \in B, \mu_{r,ji} \text{ significant}\}$, compute mean $|\widehat{\mu}_{r,ji}|$ across these edges, perform within-lag comparison using Mann-Whitney $U$ test comparing $(A \to B)$ versus all other pairs at lag $r$, and conduct across-lag comparison using paired $t$-test for the same pair $(A \to B)$ at different lags. Heatmaps show mean $|\mu|$ for each $(A, B)$ pair with significance stars from Mann-Whitney tests.

\subsection{Monoamine Connectome Evaluation}
\label{app:monoamine}

\subsubsection{Alignment Procedure}

We preprocess neuron names (collapsing L/R and D/V subtypes), filter to the 80 neurons in our analysis set, construct adjacency matrices, and binarize such that a directed edge $(i \to j)$ exists if the transmitter/receptor is present.

\subsubsection{Metric Selection and Rationale}
We evaluate our model using three complementary metrics: AUROC (Area Under the Receiver Operating Characteristic Curve), AUPRC (Area Under the Precision-Recall Curve), and the Max F1 Score. The choice of these metrics is motivated by the extreme sparsity of the connectome ($<10\%$ density) and the need to capture different aspects of model performance.

\textbf{AUROC (Ranking Quality):} Measures the probability that a randomly chosen true edge is ranked higher than a randomly chosen non-edge. This metric effectively quantifies how well the model captures the underlying biological physics of the system, independent of the choice of threshold. A high AUROC indicates that the learned coupling scores assign statistically higher values to true biological connections.

\textbf{AUPRC (Reliability):} Measures the trade-off between precision and recall across all thresholds, focusing specifically on the positive class (edges). In sparse datasets like connectomes, AUROC can be overly optimistic due to the large number of true negatives. AUPRC provides a stricter measure of how reliable the top-ranked predictions are, penalizing false positives more heavily.

\textbf{F1 Score (Recoverability):} Represents the harmonic mean of precision and recall at the optimal decision threshold. While AUROC and AUPRC measure ranking performance across all thresholds, the Max F1 score answers a practical question: \textit{is there a single cut-off point that recovers a graph structure topologically similar to the ground truth?} We analyze the peaks of the F1 score specifically because they indicate the time lag at which the inferred functional graph is closest to the structural connectome.

\begin{figure}[ht]
    \centering
    \includegraphics[width=0.85\linewidth]{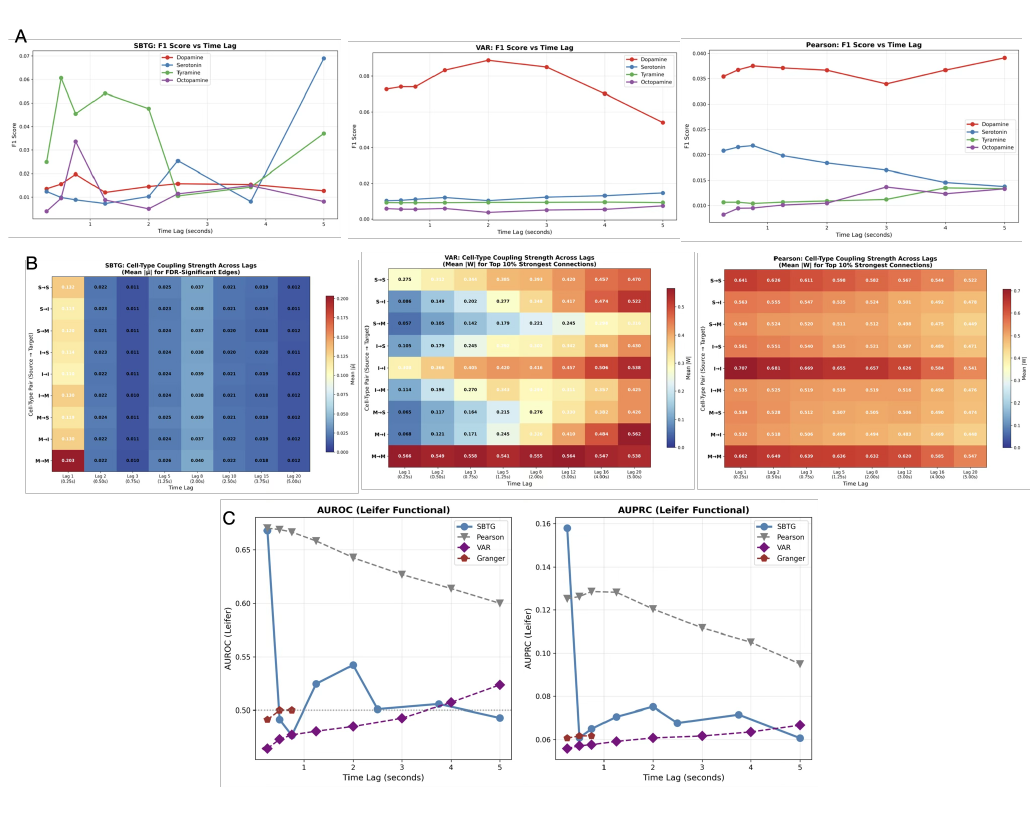}
    \caption{(A) Monoamine-connectome prediction (F1) versus lag for SBTG and representative baselines, shown separately by transmitter (dopamine, serotonin, tyramine, octopamine).(B) Cell-type coupling heatmaps (sensory/interneuron/motor) as a function of lag, comparing SBTG to VAR and Pearson. (C) Randi functional benchmark performance versus lag (AUROC, AUPRC) across methods.}
    \label{app:celltype-methods}
\end{figure}

The temporal profile of these recoverability scores is visualized in Figure~\ref{app:chem_gap} for Electrical (gap junctions) vs. Chemical (neurotransmitters) connectomes and ionotropic (GABA-A) or metabotropic (GABA-B) connectomes \cite{white1986structure, Cook2019, Yemini2021}.

\begin{figure}[ht]
    \centering
    \includegraphics[width=0.85\linewidth]{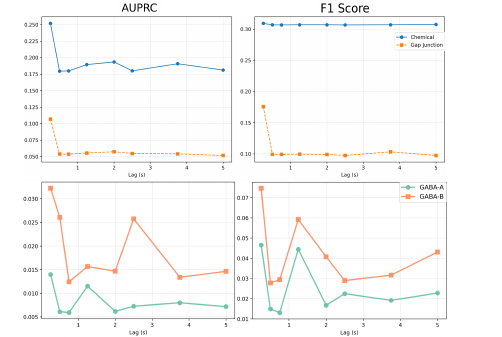}
    \caption{\textbf{Lag-resolved recoverability for structural and receptor networks.}
    \textbf{Top:} Chemical vs.\ Gap Junction benchmarks as a function of lag (left: AUPRC, right: F1).
    The Gap Junction network shows a sharp peak at $t=0.25$s and rapidly decays, whereas Chemical Synapses exhibit broader persistence across lags.
    \textbf{Bottom:} GABA-A vs.\ GABA-B benchmarks (left: AUPRC, right: F1), showing strongest recoverability at the shortest lag with additional lag-dependent structure, particularly for GABA-B.}
    \label{app:chem_gap}
\end{figure}

\subsection{Peak Lag Analysis}
\label{app:peak_analysis}

To characterize the dominant timescales of synaptic, functional, and neuromodulatory signaling, we analyzed the Model Performance ($F_1$ score) as a function of time lag.

To do so, first we identified the discrete lag $t_{\mathrm{max}}$ maximizing the observed $F_1$ score:
\begin{equation}
t_{\mathrm{max}} = \arg\max_{t \in \mathcal{T}} F_1(t).
\end{equation}
Second, to estimate the true biological peak $t_{\mathrm{peak}}$ between sampled points, we applied local parabolic interpolation. We fit a parabola to the discrete maximum $(t_{\mathrm{max}}, y_{\mathrm{max}})$ and its immediate neighbors $(t_{\mathrm{max}}-\Delta t, y_{\mathrm{prev}})$ and $(t_{\mathrm{max}}+\Delta t, y_{\mathrm{next}})$. The vertex of this parabola provides the sub-sample peak estimate:
\begin{equation}
t_{\mathrm{peak}} = t_{\mathrm{max}} + \frac{\Delta t}{2} \cdot \frac{y_{\mathrm{prev}} - y_{\mathrm{next}}}{y_{\mathrm{prev}} - 2y_{\mathrm{max}} + y_{\mathrm{next}}}.
\end{equation}
This estimator assumes the underlying curve is smooth and concave at the peak. Peaks occurring at the boundary of the sampling window (e.g., Serotonin at 5s) were not interpolated.

\begin{table}[h]
\centering
\caption{Peak time lags for functional connectivity benchmarks. Discrete peaks are the lags with maximum sampled F1 score. Interpolated peaks are estimated using parabolic interpolation around the discrete maximum to find the sub-sample peak location.}
\label{tab:peak_lags_appendix}
\begin{small}
\begin{tabular}{lrrrr}
\toprule
Network & Peak Lag (s) & Interp Lag (s) & Peak F1 & Interp F1 \\
\midrule
Monoamine: dopamine & 0.75 & 0.82 & 0.02 & 0.02 \\
Monoamine: serotonin & 5.00 & 5.00 & 0.07 & 0.07 \\
Monoamine: tyramine & 0.50 & 0.55 & 0.06 & 0.06 \\
Monoamine: octopamine & 0.75 & 0.87 & 0.03 & 0.04 \\
Structural: Cook & 0.25 & 0.25 & 0.34 & 0.34 \\
Functional: Randi & 2.00 & 1.88 & 0.69 & 0.69 \\
\bottomrule
\end{tabular}
\end{small}
\end{table}


\end{document}